\newif\iffull
\def\BibTeX{{\rm B\kern-.05em{\sc i\kern-.025em b}\kern-.08em
    T\kern-.1667em\lower.7ex\hbox{E}\kern-.125emX}}
\author{\IEEEauthorblockN{Rahul Yesantharao}
\IEEEauthorblockA{\textit{MIT CSAIL}} 
\textit{rahuly@mit.edu}
\and
\IEEEauthorblockN{Yiqiu Wang}
\IEEEauthorblockA{\textit{MIT CSAIL}}
\textit{yiqiuw@mit.edu}
\and
\IEEEauthorblockN{Laxman Dhulipala}
\IEEEauthorblockA{\textit{MIT CSAIL}} 
\textit{laxman@mit.edu}
\and
\IEEEauthorblockN{Julian Shun}
\IEEEauthorblockA{\textit{MIT CSAIL}} 
\textit{jshun@mit.edu}
}
\newcommand{\kdtree}{$k$d-tree}
\newcommand{\kdtrees}{\kdtree s}
\newcommand{\knn}{$k$-NN}
\newcommand{\etal}{{et al}.}
\newcommand{\logtree}{BDL-tree}
\newcommand{\ourtree}{BDL-tree}
\newcommand{\buildveb}{BuildvEB$_\text{S}$}
\newcommand{\buildvebrecursive}{BuildvEBRecursive$_\text{S}$}
\newcommand{\buildbhl}{Build$_\text{S}$}
\newcommand{\eraseS}{Erase$_\text{S}$}
\newcommand{\eraseSrecursive}{EraseRecursive$_\text{S}$}
\newcommand{\insertlog}{Insert$_\text{L}$}
\newcommand{\eraselog}{Erase$_\text{L}$}
\newcommand{\knnlog}{kNN$_\text{L}$}
\newcommand{\knnserial}{kNN$_\text{S}$}
\newcommand{\tablecaption}[1]{\vspace{5pt}\caption{#1}}
\newtheorem{theorem}{Theorem}
\begin{document}

\title{Parallel Batch-Dynamic $k$d-trees}

\maketitle
\thispagestyle{plain} 
\pagestyle{plain} 

\begin{abstract}
$k$d-trees are widely used in parallel databases to support efficient neighborhood/similarity queries. Supporting parallel updates to \kdtrees{} is therefore an important operation.
In this paper, we present BDL-tree, a parallel, batch-dynamic implementation of a \kdtree{} that allows for efficient parallel \knn{} queries over dynamically changing point sets. BDL-trees consist of a log-structured set of $k$d-trees which can be used to efficiently insert or delete batches of points in parallel with polylogarithmic depth. Specifically, given a BDL-tree with $n$ points, each batch of $B$ updates takes $O(B\log^2{(n+B)})$ amortized work and $O(\log(n+B)\log\log{(n+B)})$ depth (parallel time).
We provide an optimized multicore implementation of BDL-trees. Our optimizations include parallel cache-oblivious $k$d-tree construction and parallel bloom filter construction. 

Our experiments on a 36-core machine with two-way hyper-threading  using a variety of synthetic and real-world datasets show that 
our implementation of BDL-tree achieves
a self-relative speedup of up to $34.8\times$ ($28.4\times$ on average) for batch insertions, up to $35.5\times$ ($27.2\times$ on average) for batch deletions, and up to $46.1\times$ ($40.0\times$ on average) for $k$-nearest neighbor queries.
In addition, it achieves throughputs of up to 14.5 million updates/second for batch-parallel updates and 6.7 million queries/second for $k$-NN queries.
We compare to two baseline $k$d-tree implementations and demonstrate that BDL-trees achieve a good tradeoff between the two baseline options for implementing batch updates.

\end{abstract}
\section{Introduction}

Nearest neighbor search is used in a wide range of applications, such as in databases, machine learning, data compression, and cluster analysis. One popular data structure for supporting $k$-nearest neighbor (\knn{}) search in low dimensional spatial data is the \kdtree{}, originally developed by Bentley~\cite{bentley1975}, as it efficiently builds a recursive spatial partition over point sets.

There has been a significant body of work (e.g.,~\cite{bentley1975, bentley-logarithmic-1, bentley-logarithmic-2, pankaj-co, agarwal2016parallel, curtin-faster-dualtree,wehr2018gpu, zellmann2019binned, bkd}) devoted to developing better \kdtree{} variants, both in terms of parallelization and spatial heuristics. However, none of these approaches tackle the problem of parallelizing batched updates, which is important given that many real-world data sets are being frequently updated. In particular, in a scenario  where the set of points is being updated in parallel, existing approaches either become imbalanced or require full rebuilds over the new point set. 
The upper tree in Figure~\ref{fig:baselines} shows the baseline approach that simply rebuilds the \kdtree{} on every insert and delete, maintaining perfect spatial balance but adding overhead to the update operations. On the other hand, the lower tree in Figure~\ref{fig:baselines} shows the other baseline approach, which never rebuilds and instead only inserts points into the existing spatial partition and marks deleted points as tombstones. This gives fast updates at the cost of potentially skewed spatial partitions.

 \begin{figure}[!t]
     \centering
     \includegraphics[width=0.32\textwidth,trim={0 140 250 0},clip]{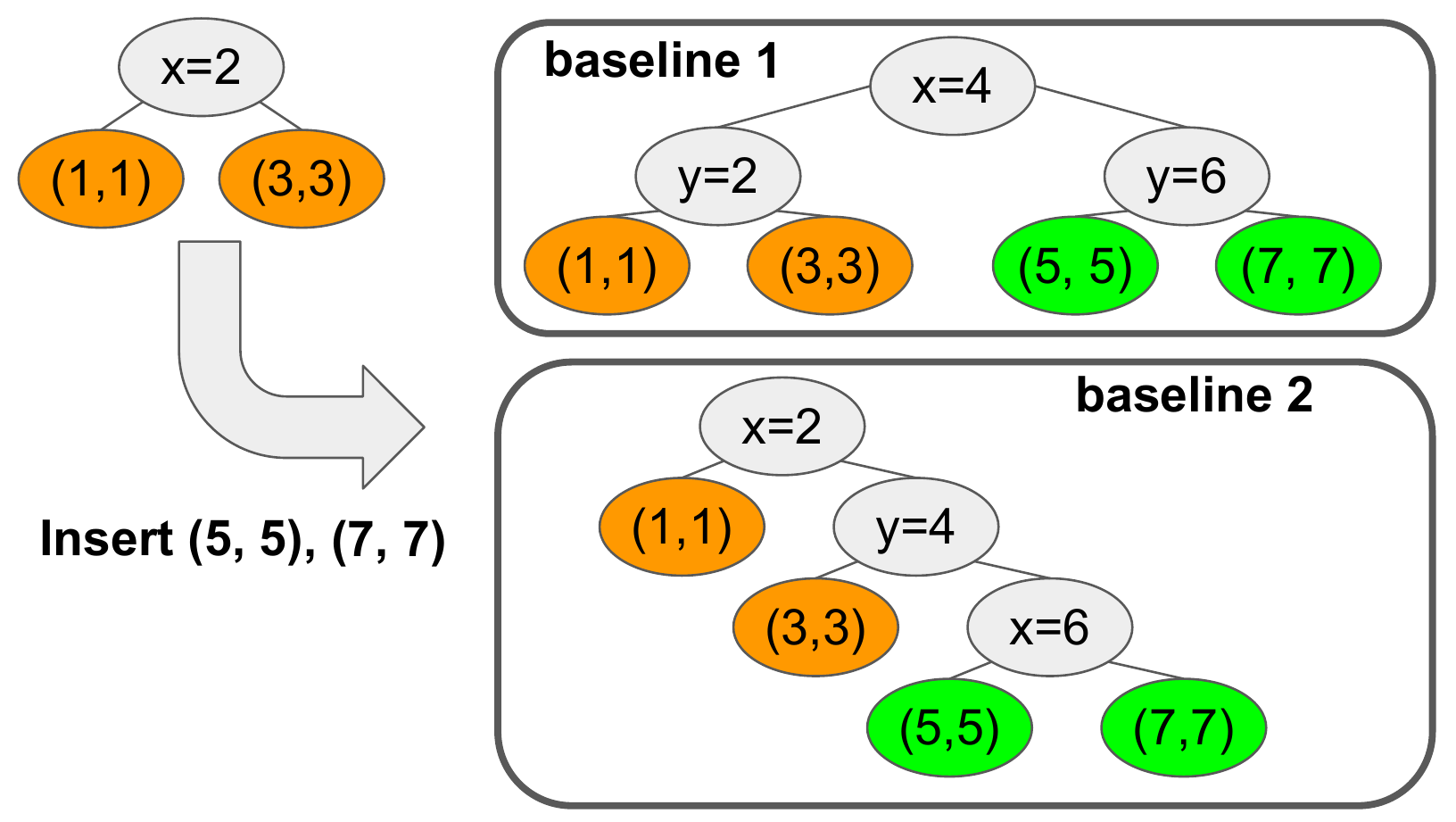}
     \caption{Difference in baseline update strategies when the 2-dimensional points $(5,5)$ and $(7,7)$ are inserted into a spatial-median \kdtree{} initially constructed on the points $(1,1)$ and $(3,3)$. The internal nodes are labeled with the splitting dimension and the coordinate of the split in that dimension. Baseline 1 rebuilds the \kdtree{} on every insertion and deletion, while baseline 2 does not rebuild the tree and instead inserts points into the existing spatial partition.}
     \label{fig:baselines}
 \end{figure}

Procopiuc \etal{}~\cite{bkd} tackled the problem of a dynamically changing point set with their Bkd-Tree, a data structure for maintaining spatial balance in the face of batched updates. 
However, it was developed for the external memory case and is not parallel. Similarly, Agarwal \etal{}~\cite{pankaj-co} developed a dynamic cache-oblivious \kdtree{}, but it was not parallel and did not support batch-dynamic operations.

We adapt these approaches and develop \ourtree{}, a new parallel, in-memory \kdtree{}-based data structure that supports batch-dynamic operations (in particular, batch construction, insertion, and deletion) as well as exact \knn{} queries. 
\ourtree{}s consist of a set of exponentially growing \kdtrees{} and perform batched updates in parallel. This structure can be seen in Figure~\ref{fig:logmethod}. Just as in the Bkd-tree and cache-oblivious tree, our tree structure consists of a small buffer region followed by exponentially growing static \kdtrees{}. Inserts are performed by rebuilding the minimum number of trees necessary to maintain fully constructed static trees. Deletes are performed on the underlying trees, and we rebuild the trees whenever they drop to below half of their original capacity. Our use of parallelism, batched updates, and our approach to maintaining balance after deletion are all distinct from the previous trees that we draw inspiration from. We show that given a \ourtree{} with $n$ points, each batch of $B$ updates takes $O(B\log^2{(n+B)})$ amortized work and $O(\log{(n+B)}\log\log{(n+B)})$ depth (parallel time). As part of our work, we develop, to our knowledge, the first parallel algorithm for the construction of cache-oblivious \kdtrees{}. 
Our construction algorithm takes $O(n\log n)$ work and $O(\log n\log\log n)$ depth. 
In addition, we implement parallel bloom filters to improve the performance of batch updates in practice. We also present a cache-efficient method for performing \knn{} queries in \ourtree{}.
We show theoretically that \ourtree{}s have strong asymptotic bounds on the work and depth of its operations. 

 \begin{figure}[!t]
     \includegraphics[width=0.4\textwidth,trim={0 140 220 0}]{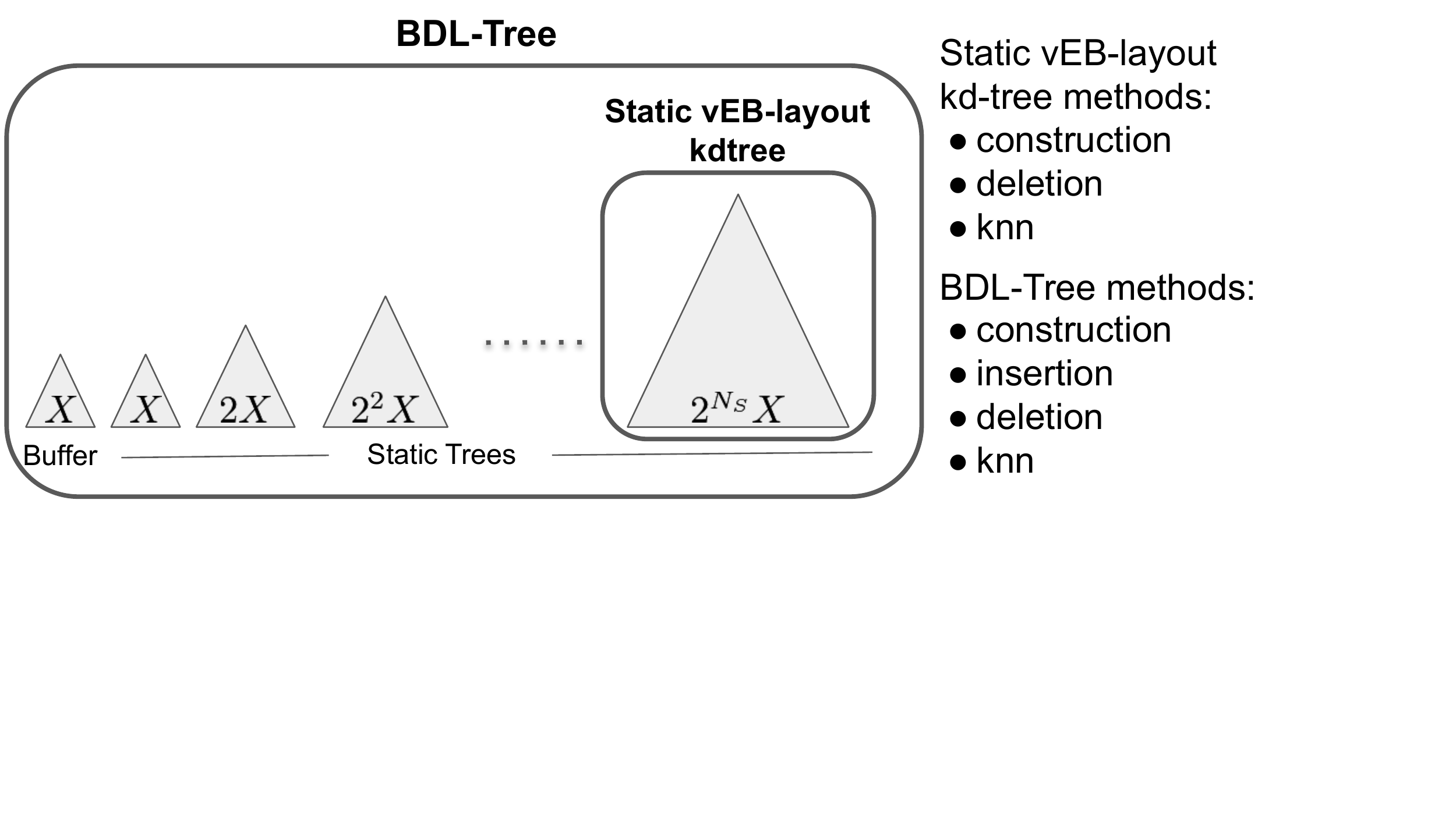}
     \caption{Logarithmic method used in \ourtree{}, with $N_s$ underlying static \kdtrees{} and a buffer \kdtree{} of size $X$. }
     \label{fig:logmethod}
 \end{figure}

We experimentally evaluate \ourtree{}s by designing a set of benchmarks to compare its performance against the two baseline approaches described above, which we implemented using similar optimizations. First, we perform scalability tests for each of the four main operations, construction, batch insertion, batch deletion, and \knn{} in order to evaluate the scalability of our data structure on many cores. On a 36-core machine with two-way hyper-threading, we find that our data structure achieves self-relative speedups of up to $35.4\times$ ($30.0\times$ on average) for construction, up to $35.0\times$ ($28.3\times$ on average) for batch insertion, up to $33.1\times$ ($28.5\times$ on average) for batch deletion, and up to $46.1\times$ ($40.0\times$ on average) for full \knn{}. The largest dataset we test consists of 321 million 3-dimensional points.
Then, we design a set of benchmarks that perform a mixed set of updates and queries in order to better understand the performance of \ourtree{}s in realistic scenarios. 
We find that, when faced with a mixed set of batch operations, \ourtree{}s consistently outperforms the two baselines and presents the best option for such a mixed dynamic setting.
Our source code implementing \ourtree{}s is publicly available at \url{https://github.com/rahulyesantharao/batch-dynamic-kdtree}.

\section{Related Work}
    Numerous tree data structures for spatial search have been proposed
    in the literature, including the \kdtree{}~\cite{bentley1975},
    quadtree~\cite{finkel1974quad}, ball tree~\cite{omohundro1989five},
    cover tree~\cite{beygelzimer2006cover}, and the R-tree~\cite{guttman1984r}.
    Among them, the \kdtree{} is a well-known data structure proposed by
    Bentley~\cite{bentley1975} in 1975.
    It is simple and yet can handle many types of
    queries efficiently. Bentley showed that the \kdtree{} incurs $O(\log n)$
    time for insertion and deletion of random nodes, and logarithmic average
    query time was observed in practice.
    The data structure has been extensively studied in the parallel setting.
    Shevtsov~\cite{shevtsov2007parallelkd}~\etal{} proposed fast construction
    algorithms for the \kdtree{} for ray tracing, by using binning techniques to distribute
    the workload among parallel processors.
    Agarwal~\cite{agarwal2016parallel}~\etal{} proposed parallel algorithms for
    a series of spatial trees, including the \kdtree{} under the massively parallel
    communication (MPC) model.
    Wehr and Radkowski~\cite{wehr2018gpu} proposed fast sorting-based algorithms for
    constructing \kdtrees{} on the GPU.
    Zellmann~\cite{zellmann2019binned} proposed algorithms for CPUs and GPUs
    for \kdtrees{} used in graphics rendering.

    The idea of decomposing a data structure into a logarithmic number of structures
    for the sake of dynamism has been proposed and used in many different scenarios.
    Bentley~\cite{bentley-logarithmic-1, bentley-logarithmic-2} first proposed
    dynamic structures for decomposable search problems.
    Specifically, he proposed general methods for converting static
    data structures into dynamic ones with logarithmic multiplicative overhead in cost,
    using a set of static data structures with size being a power of two.
    Agarwal~\etal~\cite{agarwal2016parallel} designed cache-oblivious
    data structures for orthogonal range searching using ideas from the
    log-structured tree.
    O'Neil~\etal~\cite{oneil1996lsm} proposed the log-structured merge (LSM) tree,
    an efficient file indexing data structure that supports efficient
    dynamic inserts and deletes in systems with multiple storage hierarchies.
    Specifically, batches of updates are cascaded over time, from faster storage
    media to slower ones.
   A parallel version of the LSM tree has been implemented~\cite{parallel-lsm}
    by dividing the key-space independently across cores, and processing it in
    a data-parallel fashion.
    Our work, on the other hand, proposes a parallel batch-dynamic \kdtree{},
    where each batch of updates can be processed efficiently in parallel, while supporting efficient \knn{} queries.
    
    We have recently discovered that, concurrent with our work, Dobson and Blelloch~\cite{dobson2021parallel} developed the zd-tree, a data structure that also supports parallel batch-dynamic updates and \knn{} queries. Their insight is to modify the basic \kdtree{} structure by sorting points based on their Morton ordering and splitting at each level based on a bit in the Morton ordering. 
    For \knn{} queries, they provide a root-based method, which is the traditional top-down \knn{} search algorithm and is what we implement, as well as a leaf-based method, which directly starts from the leaves and searches up for nearest neighbors. The leaf-based method works and is faster when the query points are points in the dataset, as it saves the downward traversal to search for the query points. 
    They prove strong work and depth bounds for construction, batch updates, and \knn{} queries assuming data sets with bounded expansion constant and bounded ratio.  Without these assumptions, however, our bounds would be at least as good as theirs.
    
    In Dobson and Blelloch's implementation, they optimize \knn{} queries by presorting the query points using the Morton ordering to improve cache locality.
    Their algorithm assumes that there is a maximum bounding box where all future data will fit into, while our algorithm does not make this assumption.
    Their implementation discretizes the \texttt{double} coordinates into 64-bit integers in order to perform Morton sort on them. As a result, they can only use at most $64/d$ bits for each of the $d$ dimensions. 
    Directly extending the implementation to higher dimensions would either lead to larger leaf sizes or using larger-width coordinates for 
    the Morton sort, either of which could add overheads. 
    Their experiments consider only 2D and 3D datasets, while we test on higher dimensional datasets as well. 
    For 2D and 3D datasets, the running times that they report seem to be in the same ballpark as our times (their \knn{} queries for constructing the \knn{} graph are much faster than ours due to the presorting described above), after adjusting for number of processors and dataset sizes, but it would be interesting future work to experimentally compare the codes on the same platform and datasets, including higher-dimensional ones. It would also be interesting to integrate some of their optimizations into our code, and to combine the approaches to build a log-structured version of the zd-tree.

\section{Preliminaries}

 \subsection{\kdtree{}}
 The \kdtree{}, first proposed by Bentley~\cite{bentley1975}, is a binary tree data structure that arranges and holds spatial data to speed up spatial queries. Given a set $P$ of $n$ $d$-dimensional points, the \kdtree{} is a balanced binary tree where each node represents a bounding box of a subset of the input points. The root node represents all of the points (and thus the tightest bounding box that includes all the points in $P$). Each non-leaf node holds a splitting dimension and splitting value that splits its bounding box into two halves using an axis-aligned hyperplane in the splitting dimension. Each child node represents the points in one of the two halves.
 This recursive splitting stops when the nodes hold some small constant number of points---these nodes are the leaves and directly represent the points.

 \subsubsection{\knn{} Search using the \kdtree{}}\label{prelim-knn}
 Given a query coordinate $q$, a \knn{} query finds the $k$ nearest neighbors of $q$
 amongst elements of the \kdtree{} by performing a pruned search on the tree~\cite{friedman1977-kdtree-nn}. The canonical approach is to traverse the tree while inserting points in the current node into a buffer that maintains only the
 $k$ nearest neighbors encountered so far. Then, entire subtrees can be pruned during the traversal based on the distance of the $k$'th nearest neighbor found so far.
 
 \subsubsection{Dual-Tree \knn{}}
 Besides the canonical approach explained above, there has been a lot of prior work on ``dual-tree" approaches~\cite{curtin-dualtree, curtin-faster-dualtree, knn-original-dualtree, esmt}, which provide speedups on serial batched \knn{} queries. In particular, the dual-tree approach involves building a second \kdtree{} over the set of query points, and then exploring the two trees simultaneously in order to exploit the spatial partitioning provided by the \kdtree{} structure. We parallelized and tested this dual-tree approach in our work.
 
 \subsection{Batch-Dynamic Data Structures}
 The concept of a parallel batch-dynamic data structure has become popular in recent years~\cite{WangY0S21,dhulipala2021parallel} as an important paradigm due to the availability of large (dynamic) datasets undergoing rapid changes.
 The idea is to batch together operations of a single type and perform them as a single batched update, rather than one at a time. 
 This approach offers two benefits. 
 First, from a usability perspective, it is often the case (especially in applications with a lot of data) that operations on a data structure can be grouped into phases or batches of a single type, so this restriction in the usage of the data structure does not significantly limit the usefulness of the data structure.
 Secondly, from a performance perspective, batching together operations of a single type allows us to group together the involved work and derive significantly more parallelism than otherwise might have been possible (while also avoiding the concurrency issues that might arise with batching together operations of varying types).

\subsection{Baselines}\label{sec:baselines}
 
 In order to benchmark and test the performance of \ourtree{}s, we implement two parallel baseline \kdtrees{} that use opposite strategies for providing batch-dynamism. 
 In particular, the first baseline \kdtree{} simply rebuilds the tree after every batch insertion and batch deletion. 
 With this approach, the tree is able to maintain a fully balanced structure in the face of a dynamically changing dataset, enabling consistently high performance for \knn{} queries. 
 This comes at the cost of reduced performance for updates. 
 The second baseline \kdtree{} never rebuilds the tree and simply maintains the initial spatial partition, inserting points into and deleting points from the existing structure.
 This allows for extremely fast batch insertions and deletions, but could potentially lead to a skewed structure and cause reduced \knn{} performance. The difference between these two baselines is graphically demonstrated in Figure~\ref{fig:baselines}. 
 We demonstrate experimentally that \logtree{} achieves a balanced tradeoff between these two baseline options for batch-dynamic parallel \kdtrees{} on which we are performing \knn{} queries. It outperforms both in the dynamic setting where \knn{} queries and batched updates are all being used.
 
 \subsection{Logarithmic Method}\label{prelim-log-method}

 The logarithmic method~\cite{bentley-logarithmic-1, bentley-logarithmic-2} for converting static data structures into dynamic ones is a very general idea.
 At a high level, the idea is to partition the static data structure into multiple structures with exponentially growing sizes (powers of 2). 
 Then, inserts are performed by only rebuilding the smallest structure necessary to account for the new points. 
 In the specific case of the \kdtree{}, a set of $N_s$ static \kdtrees{} is allocated, with capacities in the set $[2^{0}, 2^{1},\ldots, 2^{N_s-1}]$, as well as an extra buffer tree with size $2^{0}$. 
 Then, when an insert is performed, the insert cascades up from the buffer tree, rebuilding into the first empty tree
 with all the points from the lower trees. 
 If desired, the sizes of all of the trees can be multiplied by a buffer size $X$, which is some constant that is tuned for performance. This structure is illustrated in Figure~\ref{fig:logmethod}. In the figure, all of the trees shown are full; one can imagine that the tree with size $2^{3}X$ is empty, so the next insert would cause the buffer and trees 0, 1, and 2 to cascade up to it.

 \subsection{Computational Model}
 We use the standard work-depth model~\cite{clrs, jaja} for multicore algorithms to analyze theoretical efficiency. The \textbf{work} of an algorithm is the total number of operations used and the \textbf{depth} is the length of the longest sequential dependence (i.e., the parallel running time). An algorithm with work $W$ and depth $D$ can be executed on $p$ processors in $W/p + O(D)$ expected time~\cite{BL99}. Our goal is to come up with parallel algorithms that have low work and depth.
 
 \subsubsection{Parallel Primitives} We use the following parallel primitives in our algorithms.
 \begin{itemize}[topsep=1pt,itemsep=0pt,parsep=0pt,leftmargin=15pt]
     \item \textbf{Prefix sum} takes as input a sequence of values $[a_1,a_2,\ldots,a_n]$, an associative binary operator $\oplus$, and an identity $i$, and returns the sequence $[i,a_1,(a_1\oplus a_2),...,(a_1\oplus a_2\oplus\cdots\oplus a_{n-1})]$ as well as the overall sum of the elements. Prefix sum can be implemented in $O(n)$ work and $O(\log{n})$ depth~\cite{jaja}.
    \item \textbf{Partition} takes an array $A$, a predicate function $f$, and a partition value $p$ and outputs a new array such that all the values $a\in A, a<p$ appear before all the values $a\in A, a\ge p$. Partition can be implemented in $O(n)$ work and $O(\log{n})$ depth~\cite{jaja}.
     \item \textbf{Median partition} takes $n$ elements and a comparator and partitions the elements based on the median value in $O(n)$ work and $O(\log n\log\log n)$ depth~\cite{jaja}. 
 \end{itemize}

\section{\ourtree{}} \label{sec:ourtree}
In this section, we introduce \ourtree{}, a parallel batch-dynamic \kdtree{} implemented using the logarithmic method~\cite{bentley-logarithmic-1, bentley-logarithmic-2} (discussed in Section~\ref{prelim-log-method}). 
\ourtree{}s build on ideas from the Bkd-Tree by Procopiuc \etal{}~\cite{bkd} and the cache-oblivious \kdtree{} by Agarwal \etal{}~\cite{pankaj-co}. The structure is depicted in Figure~\ref{fig:logmethod}.

We implement the underlying \kdtrees{} in an \ourtree{} as nodes in a contiguous memory array, where the root node is the first element in the array.
The \kdtrees{} are built using the van Emde Boas (vEB)~\cite{arge-cache-oblivious-book,demaine2002cache,pankaj-co} recursive layout. 
Agarwal~\cite{pankaj-co}~\etal{} show that this memory layout can be used with \kdtrees{} to make traversal cache-oblivious, although dynamic updates on a single tree become very complex. 
However, in the logarithmic method, the underlying \kdtrees{} themselves are static, and so we are able to sidestep the complexity of cache-oblivious updates on these trees and benefit from the improved cache performance of the vEB layout. 
For the buffer region of the \ourtree{}, we use a regular \kdtree{}, laid out like a binary-heap in memory (i.e., nodes are in a contiguous array, and the children of index $i$ are $2i$, $2i+1$). We will discuss the key parallel algorithms that we used in our implementation: construction, deletion, and \knn{} on the underlying individual \kdtrees{} (Section~\ref{section:single-alg-top}) and construction, insertion, deletion, and \knn{} on \ourtree{} (Section~\ref{section:log-alg-top}). Note that we do not need to support insertions on individual \kdtrees{}, because our \ourtree{} simply rebuilds the necessary \kdtrees{} upon insertions.
We use subscript \textsc{S} to denote algorithms on the underlying \kdtrees{}, and subscript \textsc{L} to denote algorithms on the full logarithmic data structure.
\iffull
\else
Due to space constraints, we defer the proofs of our theorem statements to the full version of the paper~\cite{full-paper-url}.
\fi

\subsection{Single-Tree Parallel Algorithms}\label{section:single-alg-top}
\subsubsection{Parallel vEB Construction}
\begin{algorithm}[!t]
\caption{Parallel vEB-layout \kdtree{} Construction}\label{alg:veb-construct}
\hspace*{\algorithmicindent} \textbf{Input}: Point Set $P$ \\
\hspace*{\algorithmicindent} \textbf{Output}: \kdtree{} over $P$, laid out with the vEB layout on a contiguous memory array of size $2|P| - 1$.
\begin{algorithmic}[1]
\Procedure{\buildveb{}}{$P$}
    \State Allocate $2|P| - 1$ nodes in contiguous memory. The tree nodes will be laid out in this space.
    \State \textsc{\buildvebrecursive{}($P$, 0, 0, $\lfloor\log(|P|)\rfloor$+1, bottom)}
\EndProcedure
\Procedure{\buildvebrecursive{}}{$Q$, $idx$, $c$, $l$, $t$}
\Statex $idx$: current node index in the memory array
\Statex $c$: current dimension to split on
\Statex $l$: number of levels to build
\Statex $t$: whether we are building the top or bottom of a tree
\State If we hit the base case $n=1$, then we construct a node at $idx$. If $t$ is \textsc{top}, then we perform a parallel median partition on $Q$ in dimension $c$ and record this split as an internal node. Otherwise, we create a leaf node that represents the points in $Q$.
\State Compute $l_b = \left\lceil\left\lceil \frac{l+1}{2} \right\rceil\right\rceil$ and $l_t = l - l_b$ (vEB layout). 
\State Recursively build the top half of the tree with \textsc{\buildvebrecursive{}($Q$, $idx$, $c$, $l_t$, top)}.
\State Compute $idx_b = idx + 2^{l_t}-1$ as the offset where the top half of the tree was just laid out.
\State Construct the $2^{l_t}$ lower subtrees in parallel with \textsc{\buildvebrecursive{}($Q_i$, $idx_i$, $(c+n_t)\mod{d}$, $l_b$, $t$)} where $Q_i$ is the subarray of points that are held by the parent of this subtree and $idx_i$ is the index at which this subtree is to be placed (precomputed with a parallel prefix sum).
\EndProcedure
\end{algorithmic}
\end{algorithm}
The algorithm for parallel construction of the cache-oblivious \kdtree{} is shown in Algorithm~\ref{alg:veb-construct}. The function itself is recursive, and so the top level \textsc{\buildveb{}} function allocates space on line 2 and calls the recursive function \textsc{\buildvebrecursive{}}. Refer to Figure~\ref{fig:vebconstruct} for a graphical representation of this construction.

\begin{figure}
     \centering
     \includegraphics[width=0.48\textwidth,trim={0 160 145 0},clip]{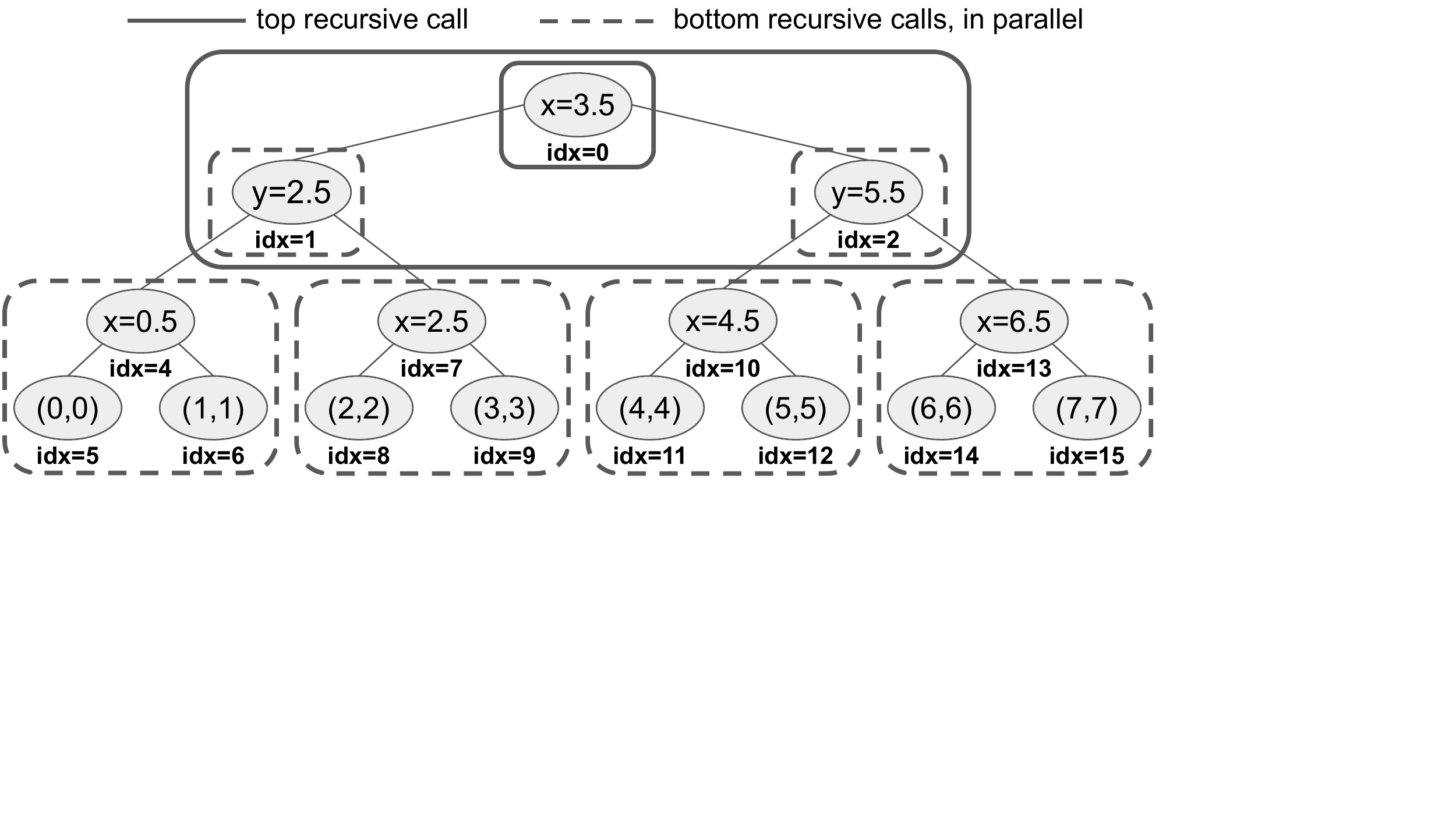}
     \caption{Constructing a vEB \kdtree{} in parallel over 8 2-dimensional points. Note that the top 3 nodes are placed before the remaining 4 bottom subtrees are built in parallel.} 
     \label{fig:vebconstruct}
 \end{figure}

The recursive function \textsc{\buildvebrecursive{}} maintains state with 5 parameters: a point set $Q$, a node index $idx$, a splitting dimension $c$, the number of levels to build $l$, and whether it is building the top or bottom of the tree (indicated by $t$). On line 5, we check for the base case---if the number of levels to build is 1, then we have to construct a node. If this is the top of a tree, then this node will be an internal node, so we perform a parallel median partition in dimension $c$ and save it as an internal node. On the other hand, if this is the bottom of the tree, we construct a leaf node that holds all the points in $Q$.
Lines 6--9 form the recursive step. In accordance with the exponential layout~\cite{pankaj-co}, we have to first construct the top ``half" of the tree and then the bottom ``half". 
Therefore, on line 6, we compute the number of levels $l_b$ in the bottom portion as the hyperceiling\footnote{The hyperceiling of $n$, denoted as $\lceil\lceil n\rceil\rceil$ is the smallest power of 2 that is greater than or equal to $n$, i.e., $2^{\lceil\log{n}\rceil}$.} of $\frac{l+1}{2}$ and the remaining number of levels $l_t$ in the top portion of the tree as $l - l_b$. On line 7, we recursively build the top half of the tree. 
Then, on line 8, we note that because the top half of the tree is a complete binary tree with $l_t$ levels, it will use $2^{l_t} - 1$ nodes. Therefore, we compute $idx_b = idx + 2^{l_t} - 1$, the node index where the bottom half of the tree should start because the trees are laid out consecutively in memory. Finally, on line 9, we construct each of the $2^{l_t}$ subtrees that fall under the top half of the tree, each with $l_b$ levels. Each of these trees falls into a distinct segment of memory in the array, and so we can perform this construction in parallel across all of the subtrees by precomputing the starting index $idx_i$ for each of the $2^{l_t}$ subtrees.

We trace this process on an example in Figure~\ref{fig:vebconstruct}, in which \textsc{\buildveb{}} is called on a set $P$ of 8 points. This spawns a call to \textsc{\buildvebrecursive{}($P[0:8]$, 0, 0, 4, \textsc{bottom})}. On line 6, we will compute $l_b=2$ and $l_t=2$, and on line 7, we spawn a recursive call to \textsc{\buildvebrecursive{}($P[0:8]$, 0, 0, 2, \textsc{top})}. This call is shown as the solid box around the top 3 nodes in Figure~\ref{fig:vebconstruct}. In this call, we will hit one further level of recursion before laying out the 3 nodes in indices 0, 1, 2. Then, the original recursive call will proceed to line 8, where it will compute $idx_b=3$ as the index to begin laying out the $2^{l_t}=4$ bottom subtrees. Finally, on line 9, we will precompute that the starting indices for the 4 bottom subtrees are $(idx_0, idx_1, idx_2, idx_3) = (3, 6, 9, 12)$. This results in 4 parallel recursive calls, shown in the 4 lower dashed boxes in Figure~\ref{fig:vebconstruct}. Each of these recursive calls internally has one more level of recursion to lay out their 3 nodes.

\begin{theorem}\label{thm:veb-construction}
The cache-oblivious \kdtree{} with a vEB layout can be constructed over $n$ points in $O(n\log{n})$ work and $O(\log{n}\log{\log{n}})$ depth.
\end{theorem}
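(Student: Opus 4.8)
The plan is to analyze work and depth separately, in each case via a recurrence that mirrors the top/bottom split of the vEB recursion in \buildvebrecursive{}. For the work I would argue level-by-level on the \emph{output} tree rather than through the recursion: the tree has $O(\log n)$ levels, and the point sets held by the nodes at any fixed level partition the input, so their sizes sum to $n$. Each internal node performs a single median partition, which costs $O(m)$ work on $m$ points, so the partitions at a fixed level cost $O(n)$ in total; the prefix sums on line 9 that compute the child offsets $idx_i$ add another $O(n)$ per level, and the allocation and leaf construction are subsumed. Summing over the $O(\log n)$ levels gives the claimed $O(n\log n)$ work.

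For the depth, I would exploit the defining property of the vEB split: building the top half over the full input partitions the $n$ points into $\Theta(\sqrt{n})$ groups of size $\Theta(\sqrt{n})$, after which the $\Theta(\sqrt{n})$ bottom subtrees are built entirely in parallel. Writing $D(n)$ for the construction depth, this gives
\[ D(n) = D_{\mathrm{top}}(n) + D(\sqrt{n}), \]
where $D(\sqrt{n})$ accounts for the parallel bottom subtrees (all of size $\Theta(\sqrt{n})$) and $D_{\mathrm{top}}(n)$ is the depth of building the top half; the base case is a single median partition, of depth $O(\log n\log\log n)$ by the cited primitive. The point of this formulation is that the argument of $D$ is square-rooted at every step, so $\log(n^{1/2^i}) = 2^{-i}\log n$ while $\log\log(n^{1/2^i}) \le \log\log n$; hence, \emph{provided} one can show $D_{\mathrm{top}}(n) = O(\log n\log\log n)$, the unfolded recurrence becomes a geometric series $\sum_i O(2^{-i}\log n\log\log n) = O(\log n\log\log n)$, matching the claim.

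The main obstacle, as I see it, is precisely the per-phase bound $D_{\mathrm{top}}(n) = O(\log n\log\log n)$. This is delicate because the median partition at an internal node consumes the point set produced by its parent's partition, so the partitions along a root-to-leaf path lie on a single dependence chain and naively their depths merely add. I would attack this by recursively applying the same top/bottom decomposition \emph{inside} $D_{\mathrm{top}}$, aiming to show that the depth of a vEB block is dominated by the median partition on its (largest) root point set, so that the same square-root size reduction recurs and the accumulation is geometric rather than additive. Particular care is needed to confirm that the extra $\log\log$ factor carried by the median-partition primitive does not spoil the geometric summation, and that the $O(\log\log n)$-deep recursion on the level count $l$ in lines 6--9 contributes no additional logarithmic factor to the depth.
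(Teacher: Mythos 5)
Your work bound is correct and is essentially the paper's own argument: the output tree has $O(\log n)$ levels, the point sets held by the nodes of any fixed level partition the input, so the median partitions at that level cost $O(n)$ work in total, and the prefix sums are subsumed, giving $O(n\log n)$.

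The depth argument, however, contains a genuine gap, and it is exactly the step you flagged: the bound $D_{\mathrm{top}}(n)=O(\log n \log\log n)$ is not merely unproven --- it is false for Algorithm~\ref{alg:veb-construct}. The vEB recursion square-roots the number of \emph{levels} in a block, which is a memory-layout quantity; it does not square-root the \emph{point sets} on which the median partitions run. Inside the top block, every internal node is still created by its own base-case median partition on its own point set, and that point set is known only after the parent's partition has completed. Hence the partitions along a root-to-leaf path of the top block form a single dependence chain over point sets of sizes $n, n/2, n/4, \ldots, \sqrt{n}$, and with the cited primitives (median partition of $m$ points in $O(\log m \log\log m)$ depth, and in any case $\Omega(\log m)$ depth for any partition that compacts $m$ points via prefix sums) the chain costs
\[
D_{\mathrm{top}}(n) \;\geq\; \sum_{j=0}^{(\log n)/2} \Omega\bigl(\log(n/2^j)\bigr) \;=\; \Omega(\log^2 n).
\]
The logarithms along the chain decay \emph{arithmetically} ($\log n, \log n - 1, \log n - 2, \ldots$), not geometrically, so your hope that a block's depth is dominated by its root partition cannot be realized: each child's partition is cheaper than its parent's by only an additive constant. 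The same accumulation afflicts the full construction, whose critical path contains the $\Theta(\log n)$ medians along an entire root-to-leaf path of the output tree; so the algorithm as written has depth $\Theta(\log^2 n)$ up to $\log\log$ factors, not $O(\log n \log\log n)$.

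You should also know that the paper's own proof does not fill this hole. It is a sketch of the same recurrence you wrote --- ``upper tree with size $O(\sqrt{n})$, lower trees in parallel, $O(\log n)$-depth prefix sums, $O(\log n \log\log n)$-depth median partition in the base case'' --- and it implicitly conflates the top block, which has $O(\sqrt{n})$ \emph{nodes} but is built over all $n$ \emph{points}, with a recursive subproblem on $O(\sqrt{n})$ points. That conflation is precisely where the missing factor of $\log n$ disappears. So your proposal isolates the crux more candidly than the paper does, but neither argument establishes the stated depth; an honest analysis of this algorithm yields $O(\log^2 n \log\log n)$ depth (and $\Omega(\log^2 n)$ with any prefix-sum-based partition), and obtaining $O(\log n \log\log n)$ would require breaking the level-by-level dependence of the median partitions, not just rearranging the recursion.
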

\iffull
\begin{proof}
The work bound is obtained by observing that there are $O(\log{n})$ levels in the fully-constructed tree, and the median partition at each level takes $O(n)$ work, giving a total of $O(n\log{n})$ work. 
For the depth bound, at each recursive step we first build an upper tree with size $O(\sqrt{n})$, and then construct the lower trees in parallel, each with size $O(\sqrt{n})$. 
Further, we use an $O(\log{n})$-depth prefix sum to compute $idx_i$ at every level except the base case and an $O(\log{n}\log\log{n})$-depth median partition in the base case. 
Overall, this results in $O(\log{n}\log{\log{n}})$ depth.
\end{proof}
\fi

After the vEB-layout \kdtree{} is constructed, it can be queried as a regular \kdtree{}---the only difference is the physical layout of the nodes in memory. The correctness of this recursive algorithm can be seen through induction on the number of levels. In particular, we form two inductive hypotheses: 
\begin{itemize}[topsep=1pt,itemsep=0pt,parsep=0pt,leftmargin=15pt]
    \item \textsc{\buildvebrecursive{}($Q$, $idx$, $c$, $l$, \textsc{top})} creates a contiguous, fully-balanced binary tree with $l$ levels rooted at memory location $idx$. Furthermore, this binary tree consists of internal \kdtree{} nodes that equally split the point set $Q$ in half at each level. 
    \item \textsc{\buildvebrecursive{}($Q$, $idx$, $c$, $\lceil\log{|Q|}\rceil+1$, \textsc{bottom})} creates a contiguous \kdtree{} with $l$ levels rooted at memory location $idx$.
\end{itemize}
The base cases, with $l=1$, for these inductive hypotheses are explicitly given on line 5. Then, the inductive step follows easily by noting that the definition of hyperceiling implies that the recursive calls on line 9 are all sized such that $l_b = \lceil\log{|Q|_i}\rceil+1$.

\subsubsection{Parallel Deletion}
\begin{algorithm}[!t]
\caption{Parallel $kd$-Tree Deletion}\label{alg:single-delete}
\hspace*{\algorithmicindent} \textbf{Input}: Point Set $P$
\begin{algorithmic}[1]
\Procedure{\eraseS{}}{$P$}
    \State \textsc{\eraseSrecursive{}($P$, 0)}
\EndProcedure
\Procedure{\eraseSrecursive{}}{$Q$, $idx$}
\Statex $idx$: current node index
\State If the current node is a leaf node, mark any points in the leaf node that are also in $Q$ as deleted. If all of the points in the current leaf are deleted, return \textsc{NULL}. Otherwise, return the current $idx$.
\State Otherwise, perform a parallel partition on $Q$ around the split represented by the current node. Let $Q_{l}, Q_{r}$ be the resulting left and right arrays, respectively, after the partition.
\State Then, recurse on the children in parallel with \textsc{\eraseSrecursive{}($Q_{l}$, $idx_{l}$)}, \textsc{\eraseSrecursive{}($Q_{r}$, $idx_{r}$)}, where $idx_{l}$ and $idx_{r}$ are the IDs of the left and right children, respectively.
\State If neither of the recursive calls return \textsc{NULL}, reset the left and right children to be the results of these calls and return the current node. If both of the recursive calls return \textsc{NULL}, return \textsc{NULL}. 
If one of the recursive calls returns \textsc{NULL} and the other does not, return the non-\textsc{NULL} node.
\EndProcedure
\end{algorithmic}
\end{algorithm}
The algorithm for parallel deletion from a single \kdtree{} is shown in Algorithm~\ref{alg:single-delete}. The function itself is recursive, so the top level \textsc{\eraseS{}} calls the subroutine \textsc{\eraseSrecursive{}} on the root node on line 2.

The recursive function \textsc{\eraseSrecursive{}} acts on one node at a time, represented by the index $idx$. On line 4, it checks for the base case---if the current node is a leaf node, it simply performs a linear scan to mark any points in the leaf node that are also in $Q$ as deleted. 
Then, it returns \textsc{NULL} if the entire leaf was emptied; otherwise, it returns the current node $idx$. Lines 5--7 represent the recursive case. First, on line 5, we perform a parallel partition of $Q$ around the current node's splitting hyperplane. We refer to the lower partition as $Q_{l}$ and the upper partition as $Q_{r}$. On line 6, we recurse on the left and right subtrees in parallel, passing $Q_{l}$ to the left subtree and $Q_{r}$ to the right. Finally, line 7 updates the tree structure. We always ensure that every node has 2 children in order to flatten any unnecessary tree traversal. The return value of \textsc{\eraseSrecursive{}} indicates the node that should take the place of $idx$ in the tree (potentially the same node)---a return value of \textsc{NULL} indicates that the entire subtree rooted at $idx$ was removed. So, if both the left and right child are removed, then we can remove the current node as well by returning \textsc{NULL}. On the other hand, if neither the left or right child are removed, then the subtree is still intact, and we simply reset the left and right child pointers of the current node and return the current node $idx$, indicating that it was not removed. Finally, if exactly one of the children was removed, then we remove the current node as well and let the remaining child connect directly to its grandparent---in this way, we remove an unnecessary internal splitting node. We do this by simply returning the non-\textsc{NULL} child, signaling that it will take the place of the current node in the \kdtree{}. 

\begin{theorem}
Deleting a batch of $B$ points from a single \kdtree{} constructed over $n$ points can be done in $O(B\log{n})$ work and $O(\log{B}\log{n})$ depth in the worst case.
\end{theorem}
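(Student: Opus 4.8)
The plan is to analyze both bounds by charging cost to the levels of the underlying \kdtree{}, whose height is $O(\log n)$ since it is built balanced over $n$ points. The key structural observation is that each of the $B$ points in the batch follows a single root-to-leaf path: at every internal node the parallel partition (line 5) sends each point to exactly one of the two children, so the subarrays $Q$ handed to the nodes at any fixed level of the tree are disjoint and their sizes sum to at most $B$. In particular $|Q| \le B$ at every node encountered, and we only recurse into a subtree when its incoming $Q$ is nonempty, so the set of active nodes is confined to the at most $B$ root-to-leaf paths that touch a point of the batch.

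For the work bound, I would first argue that the partition at a node costs work linear in $|Q|$ at that node, and that summing over all active nodes at a single level gives $O(B)$ by the disjointness observation above. Multiplying by the $O(\log n)$ levels yields $O(B\log n)$ for all the internal partitions. It then remains to account for the leaf base case (line 4) and the constant-work bookkeeping of reconnecting children (line 7): since leaves hold a constant number of points, each leaf scan is $O(1)$ and the total over all touched leaves is $O(B)$, while the reconnection work is $O(1)$ per active node, both dominated by $O(B\log n)$.

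For the depth bound, I would follow the critical path of the fork-join recursion, which descends $O(\log n)$ levels because the two child calls on line 6 run in parallel. At each node on this path the only super-constant cost is the parallel partition, which runs in $O(\log |Q|) = O(\log B)$ depth since $|Q| \le B$ everywhere; the leaf scan and the reconnection step add only $O(1)$. Summing $O(\log B)$ over the $O(\log n)$ levels of the path gives the claimed $O(\log B \log n)$ depth.

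The main obstacle I anticipate is making the work accounting airtight: the naive recursion as written would, on an empty $Q$, still descend into both children and touch every node of an otherwise-untouched subtree, which would inflate the work to $O(n)$. The argument therefore hinges on short-circuiting the recursion on empty point sets so that only the $O(B)$ batch paths are explored, and on the disjointness of the per-level subarrays that caps each level's partition work at $O(B)$ rather than $O(n)$. A secondary point to verify is that the leaf case genuinely costs $O(1)$, which relies on the constant leaf capacity of the \kdtree{} so that each leaf receives and matches only a constant number of deletion points.
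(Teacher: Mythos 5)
Your proof is correct and follows essentially the same approach as the paper's: charging $O(B)$ partition work per level across the $O(\log n)$ levels of the balanced tree for the work bound, and multiplying the $O(\log B)$ depth of each level's parallel partition by the $O(\log n)$ levels for the depth bound. Your additional observations---the disjointness of the per-level subarrays and the need to short-circuit recursion on empty point sets to avoid touching untouched subtrees---are details the paper's terser proof leaves implicit, and they strengthen rather than change the argument.
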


\iffull
\begin{proof}
We can see the work bound by noting that each of the $B$ points traverse down $O(\log{n})$ levels as part of the algorithm. For the depth, note that in the worst-case the parallel partition at each level operates over $O(B)$ points at each level. Because parallel partition has logarithmic depth, this would result in a worst-case $O(\log{B})$ depth at each of the $O(\log{n})$ levels, giving the overall depth of $O(\log{B}\log{n})$. 
\end{proof}
\fi

\subsubsection{Data-Parallel \knn{}}

We execute our \knn{} searches in a data-parallel fashion by parallelizing across all of the query points in a batch. The \knn{} search for each point is executed serially.
We implement a ``\knn{} buffer", a data structure that maintains a list of the current $k$-nearest neighbors and provide quick insert functionality to test and insert new points if they are closer than the existing set. The data structure maintains an internal buffer of size $2k$. To insert a point, it simply adds that point to the end of the buffer. If the buffer is filled up, then it uses a serial selection algorithm to partition the buffer around the $k$-th nearest element and clears out the remaining $k$ elements. This achieves a serial amortized $O(1)$ runtime (because the selection partition step is $O(k)$ and is only performed for every $k$ insertions).

To implement batched \knn{} on the \kdtree{}, we perform a \knn{} search for each individual point in parallel across all the points. We now describe the \knn{} method (\textsc{\knnserial{}}) for a single point $p$. We first allocate a \knn{} buffer for the point. Then, we recursively descend through the \kdtree{} searching for the leaf that $p$ falls into. When we find this leaf, we add all of the points in the leaf to the \knn{} buffer. Then, as the recursion unfolds, we check whether the \knn{} buffer has $k$ points. If it does not, we add all the points in the sibling of the current node to the \knn{} buffer to try to fill up the buffer with nearby points as quickly as possible to improve our estimate of the $k$-th nearest neighbor. 
Otherwise, we use the current distance of the $k$-th nearest neighbor to prune subtrees in the tree. In particular, if the bounding box of the current subtree is entirely contained within the distance of the $k$-th nearest neighbor, we add all points in the subtree to the \knn{} buffer. If the bounding box is entirely disjoint, then we prune the subtree. Finally, if they intersect, we recurse on the subtree.

\iffull
\begin{theorem}
 For a constant $k$, \knn{} queries over a batch of $B$ points can be performed over a single \kdtree{} containing $n$ points in worst-case $O(Bn)$ work and worst-case $O(n)$ depth.
\end{theorem}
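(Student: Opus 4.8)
The plan is to bound a single serial query first, then compose the $B$ queries using the data-parallel structure of the batch. Since the batch is processed by running one independent serial \textsc{\knnserial{}} call per query point in parallel, the total work is the sum of the per-query work and the depth is the maximum over the per-query depths. Thus it suffices to show that a single serial query costs $O(n)$ work (which, being serial, is also its depth); the work bound $O(Bn)$ and depth bound $O(n)$ then follow immediately.

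To bound a single query, I would establish two facts. First, the \kdtree{} over $n$ points has $O(n)$ nodes in total: leaves hold a constant number of points, so there are $O(n)$ leaves and hence $O(n)$ internal nodes. The traversal in \textsc{\knnserial{}} is a recursion that at each node either prunes, descends, recurses on an intersecting subtree, or adds a whole fully-contained subtree wholesale; in none of these cases does it revisit a node it has already finished, so each of the $O(n)$ nodes is touched $O(1)$ times. In the worst case no pruning is ever effective and all $O(n)$ nodes are visited, giving an $O(n)$ bound on the traversal itself.

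Second, I must show the \knn{}-buffer operations do not inflate this past $O(n)$ for constant $k$. Each buffer insert is worst-case $O(k) = O(1)$ for constant $k$ (the selection step that occurs once per $k$ inserts is $O(k)$, and every other insert is $O(1)$), so it suffices to bound the total number of inserts by $O(n)$. Points are inserted either (i) when a fully-contained subtree is dumped into the buffer, in which case we do not recurse into that subtree and hence never insert those points again, (ii) at the leaf the query descends into, or (iii) during the sibling-filling step. For (i) and (ii) each point is inserted $O(1)$ times, contributing $O(n)$ inserts overall. For (iii), since leaves have constant size and sibling subtree sizes at most double as the recursion unwinds, the filling step halts after accumulating $O(k)$ points, contributing only $O(k) = O(1)$ inserts. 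Summing, a single query performs $O(n)$ inserts at $O(1)$ each, so its total work is $O(n)$.

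Finally, composing the pieces: each of the $B$ serial queries costs $O(n)$ work, so the total work is $O(Bn)$; running them in parallel, the depth equals the maximum single-query depth, which is its serial running time $O(n)$, giving the claimed $O(n)$ depth. The main subtlety, and the step I would be most careful about, is part (iii): verifying that the sibling-filling operation, which dumps an entire sibling subtree into the buffer, cannot add more than $O(k)$ points before the buffer reaches $k$ elements, so that it stays $O(1)$ for constant $k$ rather than secretly costing $O(n)$ at a high level of the tree.
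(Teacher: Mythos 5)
Your proposal is correct and takes essentially the same approach as the paper's proof: bound a single serial query by worst-case $O(n)$ (the entire tree of size $O(n)$ may be searched, with $O(1)$-cost buffer inserts for constant $k$), then compose the $B$ independent queries run in parallel to obtain $O(Bn)$ work and $O(n)$ depth. The paper's argument is a terser version of exactly this; your case analysis of the buffer-insert count (including the sibling-filling step) simply fills in details the paper leaves implicit.
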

\begin{proof}
In the worst-case, we have to search the entire tree, of size $O(n)$, resulting in total work of $O(Bn)$ (due to the amortized $O(1)$ insert cost for \knn{} buffers) and depth of $O(n)$, as the queries are done in parallel over the batch, but each search is serial. 
\end{proof}
\fi

As noted by Bentley~\cite{bentley1975} and Friedman \etal{}~\cite{friedman1977-kdtree-nn}, the work for a single nearest-neighbor query on a \kdtree{} is empirically found to be logarithmic in $n$, so the experimental runtime and scalability are much better than suggested by the worst-case bounds.

\subsection{Batch-Dynamic Parallel Algorithms}\label{section:log-alg-top}
This section describes our algorithms for supporting batch-dynamic updates on \ourtree{}s.

\subsubsection{Parallel Insertion}
\begin{algorithm}[!t]
\caption{Parallel \ourtree{} Batch Insertion}\label{alg:log-insert}
\hspace*{\algorithmicindent} \textbf{Input}: Point Set $P$ 
\begin{algorithmic}[1]
\Procedure{\insertlog{}}{$P$}
\State Build an integer bitmask $F$ that represents the static trees within the logarithmic tree structure that are currently filled using 1's, and the trees that are empty using 0's.
\State Compute $F_{new} = F + \frac{|P|}{X}$, where $X$ is the buffer tree size. This is the new bitmask of trees that should be filled.
\State Based on the difference between $F$ and $F_{new}$, determine which trees should be combined into larger trees.
\State Gather the relevant points and construct all the new trees in parallel using \textsc{\buildveb{}} (or \textsc{\buildbhl{}} for the buffer tree).
\EndProcedure
\end{algorithmic}
\end{algorithm}

Insertions are performed in the style of the logarithmic method~\cite{bentley-logarithmic-1, bentley-logarithmic-2}, with the goal of maintaining the minimum number of full trees within \logtree{}. Thus, upon inserting a batch $B$ of points, we rebuild larger trees if it is possible using the existing points and the newly inserted batch. This is implemented as shown in Algorithm~\ref{alg:log-insert}, and depicted in Figure~\ref{fig:bdl-insert}. 

First, on line 2, we build a bitmask $F$ of the current set of full static trees in the logarithmic structure. Then, on line 3, because the buffer \kdtree{} has size $X$, we can add $|P|/X$ to $F$ to compute a new bitmask $F_{new}$ of full trees that would result if we added $|P|$ points to the tree structure. As an implementation detail, note that we first add $|P| \mod{X}$ points to the buffer \kdtree{}---if we fill up the buffer \kdtree{}, then we gather the $X$ points from it and treat them as part of $P$, effectively increasing the size of $P$ by $X$. Then, on line 4, taking the bitwise difference between these two bitmasks gives the set of trees that should be consolidated into new larger trees---specifically, any tree that is set in $F_{new}$ but not in $F$ must be constructed from trees that are set in $F$ but not in $F_{new}$. After determining which trees should be combined into new trees, on line 5 we construct all the new trees in parallel---in parallel for each new tree to be constructed, we deconstruct and gather all the points from trees that are being combined into it and then we construct the new tree over these points and any additional required points from $P$ using Algorithm~\ref{alg:veb-construct}.

Refer to Figure~\ref{fig:bdl-insert} for an example of this insertion method (suppose for this example that $X>2$). In Figure~\ref{fig:bdl-insert-0}, the \ourtree{} contains $X$ points, giving a bitmask of $F=1$ (because only the smallest tree is in use). If we insert $X+1$ points, then we put one node in the buffer tree and compute $F_{new} = 1 + \frac{X}{X} = 2$, and so we have to deconstruct static tree 0 and build static tree 1, as shown in Figure~\ref{fig:bdl-insert-1}. Then, if we insert $X+1$ points again, then we again put one point in the buffer tree and compute $F_{new}=2 + \frac{X}{X} = 3$, and so we simply construct tree 0 on the $X$ new points (leaving tree 1 intact), as seen in Figure~\ref{fig:bdl-insert-2}. Finally, if we then insert $X-1$ points, we note that this would fill the buffer up, so we take 1 point from the buffer and insert $X$ points; then, $F_{new} = 3 + \frac{X}{X} = 4$, and so we deconstruct trees 0, 1 and construct tree 2, as seen in Figure~\ref{fig:bdl-insert-3}.

\begin{figure}[!t]
    \begin{subfigure}{0.24\textwidth}
        \centering
        \includegraphics[width=0.8\textwidth]{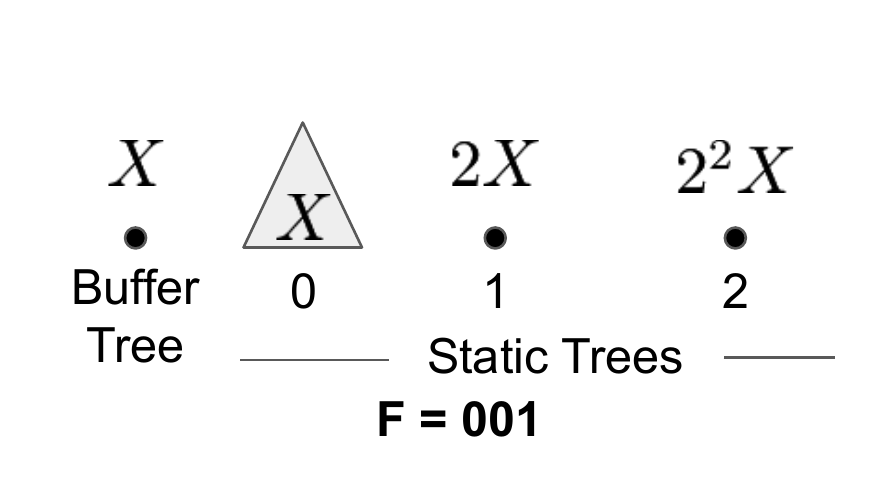}
        \caption{Static tree 0 is full.}
        \label{fig:bdl-insert-0}
    \end{subfigure}
    \begin{subfigure}{0.24\textwidth}
        \centering
        \includegraphics[width=0.8\textwidth]{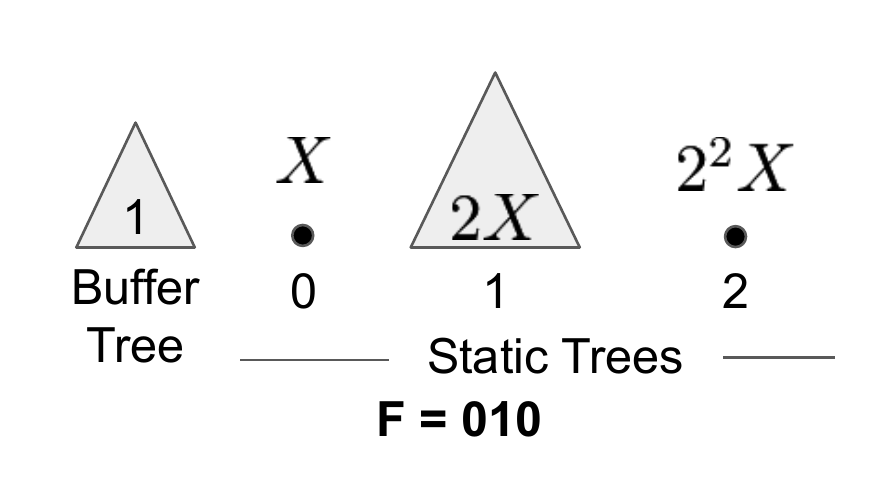}
        \caption{Static tree 1 is full and buffer tree has 1 point.}
        \label{fig:bdl-insert-1}
    \end{subfigure}
    \begin{subfigure}{0.24\textwidth}
        \centering
        \includegraphics[width=0.8\textwidth]{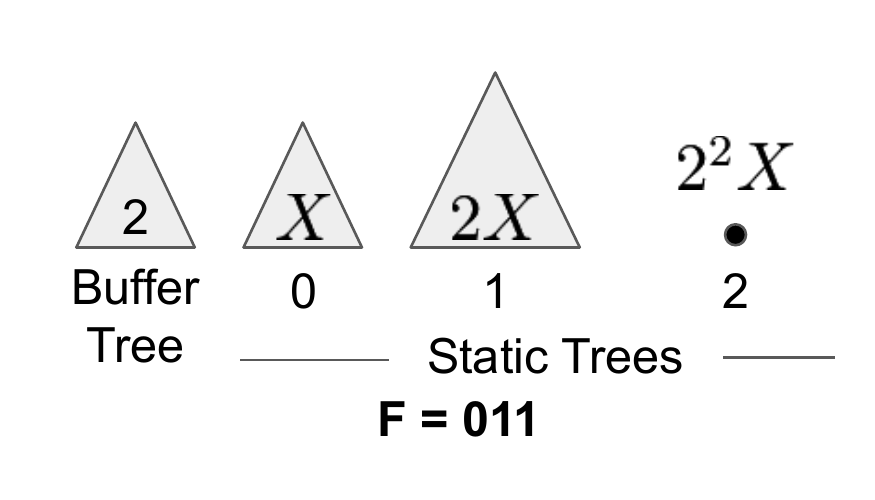}
        \caption{Static trees 0 and 1 are full and buffer tree has 2 points.}
        \label{fig:bdl-insert-2}
    \end{subfigure}
    \begin{subfigure}{0.24\textwidth}
        \centering
        \includegraphics[width=0.8\textwidth]{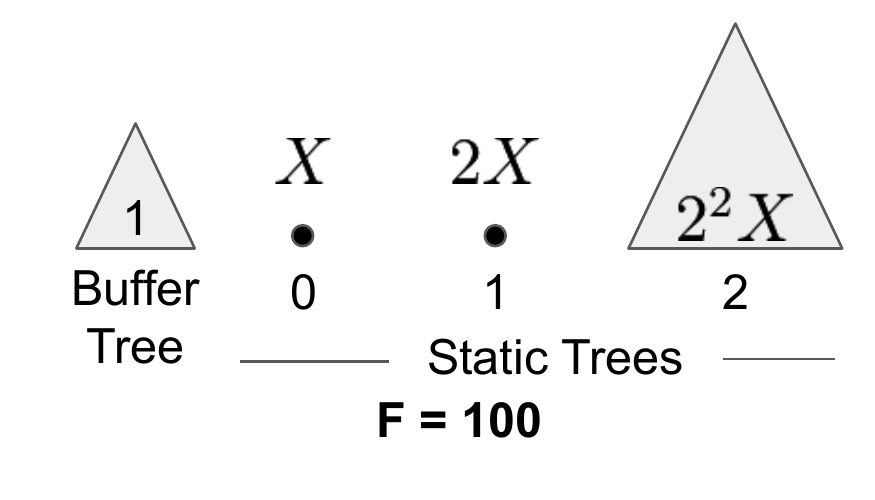}
        \caption{Static tree 2 is full and buffer tree has 1 point.}
        \label{fig:bdl-insert-3}
    \end{subfigure}
    \caption{A \ourtree{} in various configurations with $X>2$; starting from~(a), inserting $X+1$ points gives~(b), then inserting $X+1$ points gives~(c), and then inserting $X-1$ points gives~(d). 
    }
    \label{fig:bdl-insert}
\end{figure}

\subsubsection{Parallel Deletion}
\begin{algorithm}[!t]
\caption{Parallel \ourtree{} Batch Deletion}\label{alg:log-delete}
\hspace*{\algorithmicindent} \textbf{Input}: Point Set $P$
\begin{algorithmic}[1]
\Procedure{\eraselog{}}{$P$}
\State In parallel, delete $P$ from each of the underlying trees which is nonempty by calling \textsc{\eraseS{}($P$)} on each of these trees.
\State In parallel, gather the points from any trees that drop to below half of their original capacity into a set $R$.
\State Call \textsc{\insertlog{}($R$)} to reinsert these points into the log-tree structure.
\EndProcedure
\end{algorithmic}
\end{algorithm}
When deleting a batch of points, the goal is to maintain balance within the subtrees. Thus, if any subtree decreases to less than half of its full capacity, we move all the points down to a smaller subtree in order to maintain balance. As seen in Algorithm~\ref{alg:log-delete}, this is implemented as a three-step process.

On line 2, we call a parallel bulk erase subroutine on each of the individual trees in parallel in order to actually erase the points from the trees. On line 3, we scan the trees in parallel and collect the points from all trees which have been depleted to less than half of their original capacity. Finally, on line 4, we use the \textsc{\insertlog{}} routine to reinsert these points into the structure.

\begin{theorem}
Given an \ourtree{} that was created using only batch insertions and deletions, each batch of $B$ updates takes
$O(B\log^2(n+B))$ amortized work and $O(\log{(n+B)}\log\log{(n+B)})$ depth, where $n$ is the number of points in the tree before applying the updates.

\end{theorem}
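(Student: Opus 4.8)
The plan is to prove the work and depth bounds separately, and to absorb both insertions and deletions into a single amortized (banking) argument. I would begin by recording the structural invariant that makes everything go through: because the \ourtree{} is built \emph{only} through \insertlog{} and \eraselog{}, the nonempty static \kdtrees{} always have capacities that are distinct powers of two (times the buffer size $X$). Hence there are $N_s = O(\log((n+B)/X)) = O(\log(n+B))$ of them, and every single tree touched during a batch has size at most $n+B$. Combined with Theorem~\ref{thm:veb-construction}, this gives the one estimate I will reuse everywhere: building a tree of size $m \le n+B$ costs $O(m\log m) = O(m\log(n+B))$ work, i.e.\ $O(\log(n+B))$ work charged to each point placed into it.

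For the insertion work I would use the classical logarithmic-method charging argument adapted to batches. On a batch insertion the update $F_{new}=F+|P|/X$ in Algorithm~\ref{alg:log-insert} behaves exactly like incrementing a binary counter, so each time a point is rebuilt it is merged \emph{strictly upward} into a larger static tree. A point can therefore participate in at most $O(\log(n+B))$ rebuilds over its lifetime, and each participation costs $O(\log(n+B))$ by the per-point construction estimate above. I would accordingly deposit $\Theta(\log^2(n+B))$ credits on every point at the moment it is inserted; these credits pre-pay for all cascades it later triggers, yielding $O(B\log^2(n+B))$ amortized work for a batch of $B$ insertions.

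For the deletion work (Algorithm~\ref{alg:log-delete}), the explicit erasing on line~2 calls \eraseS{} on each of the $O(\log(n+B))$ trees, and the single-tree deletion bound gives $O(B\log(n+B))$ work per call, for $O(B\log^2(n+B))$ in total. The delicate step is charging the rebuilds triggered when a tree drops below half capacity. Here I would exploit the half-capacity invariant: a tree of capacity $2^iX$ is rebuilt only after at least $2^{i-1}X$ of its points have been deleted, so I can charge the reinsertion of its $\le 2^iX$ points—each of amortized insertion cost $O(\log^2(n+B))$ by the argument above—to those $\ge 2^{i-1}X$ deletions, at $O(\log^2(n+B))$ per deletion. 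Since each deletion funds at most one such rebuild before the tree's deficit resets, and the reinserted points draw their cascade credits from this same deletion budget rather than spawning fresh deletions, the charging does not regress, and the amortized deletion cost is also $O(B\log^2(n+B))$. For the depth, the insertion path is dominated by the parallel \buildveb{} calls: all newly formed trees are constructed simultaneously, so the depth equals that of one construction, $O(\log(n+B)\log\log(n+B))$ by Theorem~\ref{thm:veb-construction}, while the bitmask arithmetic, gathering, and prefix sums add only $O(\log(n+B))$; the deletion path is the $O(\log(n+B))$ \eraseS{} calls run in parallel followed by a single \insertlog{}, whose construction depth again governs the bound.

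I expect the main obstacle to be making the deletion amortization fully airtight, in particular guaranteeing that every reinsertion produced by \eraselog{} is funded out of the deletion budget and cannot set off further half-capacity rebuilds that escape the charging scheme. The clean way to secure this is to replace the informal credit bookkeeping with an explicit potential function that simultaneously tracks the unspent per-point insertion credits and each static tree's deficit relative to full capacity, and then to verify that for every batch operation the amortized cost (actual work plus change in potential) telescopes to $O(B\log^2(n+B))$. A secondary point to check is that the parallel \eraseS{} depth, which the single-tree bound reports as $O(\log B\log(n+B))$, is indeed subsumed by the construction depth $O(\log(n+B)\log\log(n+B))$ in the regimes the theorem targets, so that the construction primitive is the true bottleneck.
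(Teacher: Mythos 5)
Your proposal is correct and follows essentially the same route as the paper's proof: amortize insertion work via the logarithmic method (you use per-point credits where the paper counts the $O(2^{\log m - i})$ rebuilds of the $i$'th tree following Bentley), charge deletion-triggered reinsertions to the at-least-half-a-capacity's worth of deletions that must have occurred since the affected tree was last built, and bound the depth by the parallel vEB construction cost of Theorem~\ref{thm:veb-construction} plus logarithmic-depth gathering. The one loose end you flag---whether the $O(\log B\log n)$ depth of the parallel \eraseS{} step is actually subsumed by $O(\log(n+B)\log\log(n+B))$---is present in the paper's proof as well, which states the \eraseS{} depth and then simply asserts the combined bound without reconciling the two.
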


\iffull
\begin{proof}
We first argue the work for only performing insertions starting from an empty \ourtree{}. In the worst case, points are added to the structure one by one. Then, as similar to the analysis by Bentley~\cite{bentley-logarithmic-1}, the total work incurred is given by noting that the number of times the $i$'th tree is rebuilt when inserting $m$ points one by one is $O(2^{\log{m}-i})$. Then, summing the total work gives $O(\sum_{i=0}^{\log{m}}{2^{i}i2^{\log{m}-i}}) = O(m\log^2{m})$, where we use the work bound from Theorem~\ref{thm:veb-construction}.  
After inserting a batch of size $B$, we have $n+B$ points in the \ourtree{}, and so the amortized work for the batch is $O(B\log^2 (n+B))$.
Now, if deletions occurred prior to a batch insertion, and the current \ourtree{} has $n$ points, there still must have been $n$ previous insertions (since we started with an empty data structure), and so the work of this batch can still be amortized against those $n$ insertions.  
We now argue the depth bound. When a batch inserted into the tree, the points from smaller trees can be gathered in worst-case $O(\log{(n+B)})$ depth (if all the points must be rebuilt) and the rebuilding process takes worst case $O(\log{(n+B)}\log\log{(n+B)})$ depth, using the result from Theorem~\ref{thm:veb-construction}.

The initial step of deleting the batch of points from each of the underlying \kdtrees{} incurs $O(B\log^2{n})$ work (there are $O(\log{n})$ \kdtrees{}, each taking work $O(B\log{n})$) and depth $O(\log{B}\log{n})$. Then, collecting the points that need to be reinserted can be done in worst-case depth $O(\log{(n+B)})$ and the reinsertion takes $O(\log{(n+B)}\log\log{(n+B)})$ depth, from before.
Overall, the depth is $O(\log{(n+B)}\log\log{(n+B)})$. The amortized work for reinserting points in trees that are less than half full is $O(B\log^2{n})$, as every point we reinsert can be charged to a deletion of another point, either from this batch or from a previous batch. This is because for a tree that is half full, there must be at least as many deletions from the tree as the number of points remaining in the tree.
\end{proof}
\fi

\subsubsection{Data-Parallel $k$-NN}\label{alg:dpknn}
\begin{algorithm}[!t]
\caption{Data-Parallel \ourtree{} \knn{}}\label{alg:log-knn}
\hspace*{\algorithmicindent} \textbf{Input}: Point Set $S$ \\
\hspace*{\algorithmicindent} \textbf{Output}: An array of arrays of the $k$ nearest neighbors of each point in $S$.
\begin{algorithmic}[1]
\Procedure{\knnlog{}}{$S$}
\State Allocate a \knn{} buffer for each of the points in $S$.
\State For each nonempty tree in the \ourtree{}, in serial, call the parallel subroutine \textsc{\knnserial($S$)} on the individual tree, passing the same set of buffers.
\State After all of the individual \textsc{\knnserial{}} calls are complete, gather and return the results from the \knn{} buffers.
\EndProcedure
\end{algorithmic}
\end{algorithm}
In the data-parallel \knn{} implementation, we parallelize over the set of points given to search for nearest neighbors. First, on line 2, we allocate a \knn{} buffer for each of the points in $S$. Then, for each of the non-empty trees in \logtree{}, we call the data-parallel \knn{} subroutine on the individual tree, passing in the set $S$ 
of points and the set of \knn{} buffers. Because we reuse the same set of \knn{} buffers for each underlying \knn{} call, we eventually end up with the $k$-nearest neighbors across all of the individual trees for each point in $S$.

\iffull

\begin{theorem}
For a constant $k$, \knn{} queries over a batch of $B$ points over the $n$ points in the \ourtree{} can be performed in $O(Bn)$ work and $O(n)$ depth.
\end{theorem}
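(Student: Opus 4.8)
The plan is to reduce the claim to the single-tree $k$-NN theorem and then sum the resulting bounds across the $O(\log(n+B))$ underlying static \kdtrees{}. First I would recall the structure of Algorithm~\ref{alg:log-knn}: it allocates one \knn{} buffer per query point, then iterates \emph{serially} over each nonempty underlying tree, invoking the data-parallel single-tree routine \textsc{\knnserial{}} (which parallelizes over the batch $S$) on each one while sharing the same set of buffers, and finally reads the answers off the buffers. The key structural fact I would establish up front is that the underlying \kdtrees{} hold disjoint subsets of the point set whose union is all $n$ points, so if $n_i$ denotes the size of the $i$-th tree then $\sum_i n_i = n$.

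For the work bound, I would apply the single-tree result (the theorem bounding a batch of $B$ queries on one \kdtree{} of $m$ points by $O(Bm)$ work) to each tree $i$, obtaining $O(B n_i)$ work per tree. Summing over all trees gives $\sum_i O(B n_i) = O\!\left(B \sum_i n_i\right) = O(Bn)$. The $O(B)$ cost of allocating the buffers on line~2 and the $O(B)$ cost of gathering the results on line~4 are dominated by this, so the total work is $O(Bn)$.

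For the depth bound, I would use that the single-tree routine has depth $O(n_i)$ on tree $i$, together with the fact that the trees are processed one after another in serial. Because a serial composition adds depths rather than taking a maximum, the overall depth is $\sum_i O(n_i)$. Here it is essential to use $\sum_i n_i = n$ rather than bounding each $n_i$ individually: naively charging each of the $O(\log(n+B))$ trees its worst-case depth would suggest an extra logarithmic factor, but since the tree sizes sum to $n$, the serial sum $\sum_i O(n_i)$ is still $O(n)$. The buffer allocation and result-gathering steps contribute only $O(\log B)$ additional depth and are subsumed.

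The main thing to be careful about is exactly this depth argument: one must observe that serializing over a logarithmic number of trees is harmless \emph{only because} their sizes form a (geometrically) decreasing sequence summing to $n$, so the additive penalty collapses to $O(n)$ rather than $O(n\log(n+B))$. Everything else follows directly by invoking the previously established single-tree bounds term by term and using that the trees partition the point set.
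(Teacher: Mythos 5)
Your proposal is correct and matches the paper's own proof, which likewise applies the single-tree bounds of $O(Bn_i)$ work and $O(n_i)$ depth to each underlying tree of size $n_i$ and sums over all trees (the sum bounding both work and depth, since the trees are processed serially). Your write-up simply makes explicit the details the paper leaves implicit, namely that $\sum_i n_i = n$ and that the serial composition of the per-tree depths therefore collapses to $O(n)$ without an extra logarithmic factor.
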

\begin{proof}
These bounds follow directly from the bounds of the underlying individual \knn{} calls. The \knn{} routine on the $i$'th underlying tree, with size $n_i$, has worst-case work $O(Bn_i)$ and depth $O(n_i)$. Summing over all $i$ gives the bounds.  
\end{proof}
\fi

\section{Implementation and Optimizations}
In this section, we describe implementation details and optimizations that we developed to speed up the \ourtree{} in practice.

\subsection{Parallel Bloom Filter}
The bloom filter is a probabilistic data structure for testing set membership, 
which can give false positive matches but no false negative matches~\cite{bloom1970}.
When erasing a batch of points from the \ourtree{}, we need to
potentially search for every point to be deleted within every individual underlying \kdtree{}.
To mitigate this overhead, we use a bloom filter to prefilter points to be erased from each individual \kdtree{} that definitely are not contained within it.
Specifically, we implemented a parallel bloom filter and added one to each individual \kdtree{} within the \ourtree{} to track the points in that individual \kdtree{}. Before erasing an input batch from each individual \kdtree{}, we filter the batch with that \kdtree{}'s bloom filter.
This optimization adds some overhead to the construction and insertion subroutines, but provides significant benefits for the delete subroutine. To mitigate some of this overhead, we construct the bloom filters only when they are needed, rather than during construction.

\subsection{Data-Parallel \knn{} Structure}
We tested three different variants of the data-parallel \knn{} search in order to determine the best parallelization scheme. Each scheme consists of 2 nested for loops; one over the input points and one over the individual \kdtrees{}. In the first variant, the outer loop is over the points and the inner loop is over the \kdtrees{}. We only parallelize the outer loop, so we perform a \knn{} search over all the trees for each point in parallel.
In the second variant, we switch the loop order, so the outer loop is over the \kdtrees{} and the inner loop is over points. We parallelize both the inner and outer loops, so we perform a \knn{} search over each of the trees in a data-parallel fashion (i.e., search for the nearest neighbors of each point in parallel), but we also parallelize over the trees. In this case, because our \knn{} buffer is not thread-safe, for each point we have to allocate a separate buffer for each tree and combine the results at the end, which adds some extra overhead. In the third variant, described in Section~\ref{alg:dpknn}, the outer loop is over the \kdtrees{} and the inner loop is over points, but in contrast to the second scheme, we only parallelize the inner loop over points. So, we perform a data-parallel \knn{} over each of the individual \kdtrees{}, one at a time. We found that the second scheme gives the most parallelism, but the third scheme has the fastest end-to-end running time, both serially and in parallel. This is due to better cache locality of the third method---each tree is completely processed before moving on to the next one. One other optimization we used in this third scheme was to process the underlying trees in order from largest to smallest. We experimentally found that the data-parallel \knn{} search scaled better on large trees, and so this gave better radius estimates earlier (as opposed to processing the trees from smallest to largest).

\subsection{Parallel Splitting Heuristic} \label{sec:splitting-heuristic}
We implemented two different splitting heuristics for our \kdtrees{}: splitting by object median and splitting by spatial median. Object median refers to a true median---at each split, we split around the median of the coordinates of the points in that dimension. On the other hand, for spatial median, we compute the average of the minimum and maximum of the coordinates of the points in the current dimension and use this as the splitting hyperplane. 
The spatial median is faster to compute, but leads to potentially less balanced trees as it is not strictly splitting the points in half at every level. 
We implement the object median with a parallel in-place sort \cite{jaja, parlaylib} and we implement the spatial median in two steps. First, we perform two parallel prefix sums, using $\min()$, $\max()$ as the predicate functions, to compute the minimum and maximum of the values, respectively. Then, we compute the spatial median as the average of these values and perform a parallel partition around this value.

\subsection{Coarsening}
We introduced a number of optimizations that were controlled by tunable parameters in our implementation. One key optimization was the coarsening of leaves, in which we let each leaf node represent up to 16 points rather than 1. This reduces the amount of space needed for node pointers and also improves cache locality when traversing the tree, as the points in a leaf node are stored contiguously.
A second key optimization was the coarsening of the serial base cases of our algorithms, in which we switch from a parallel algorithm to a serial version for recursive cases involving subtrees with less than 1000 points in order to mitigate the overhead of spawning new threads. A third parameter was the buffer tree size, which we set to 1024.

\section{Experiments}

We designed a set of experiments to investigate the performance and scalability of \ourtree{} and compare it to the two baselines described earlier.

\begin{enumerate}[topsep=1pt,itemsep=0pt,parsep=0pt,leftmargin=15pt]
  \item \textbf{B1} is a baseline described in Section~\ref{sec:baselines}, where the \kdtree{} is rebuilt on each batch insertion and deletion in order to maintain balance. This allows for improved query performance (as the tree is always perfectly balanced) at the cost of slowing down dynamic operations.
  \item \textbf{B2} is another baseline described in Section~\ref{sec:baselines}. It inserts points directly into the existing tree structure without recalculating spatial splits. This results in very fast inserts and deletes at the cost of potentially skewed trees (which would slow down query performance).
  \item \textbf{BDL} is our \ourtree{} described in Section~\ref{sec:ourtree}. It represents a tradeoff between batch update performance and query performance. By maintaining a set of balanced trees, it is able to achieve good performance on updates without sacrificing the quality of the spatial partition. 
\end{enumerate}

We use the following set of experiments to measure the scalability of \logtree{} and compare its performance characteristics to the baselines.
\begin{enumerate}[topsep=1pt,itemsep=0pt,parsep=0pt,leftmargin=15pt]
    \item Construction (Section \ref{exp:construction}): construct the tree over a dataset.
    \item Insertion (Section \ref{exp:insert-scalability}): insert fixed-size batches of points into an empty tree until the entire dataset is inserted.
    \item Deletion (Section \ref{exp:delete-scalability}): delete fixed-size batches of points from a tree constructed over the entire dataset until the entire dataset is deleted.
    \item \knn{} (Section \ref{exp:knn-scalability}): find the $k$-nearest neighbors of all the points in the dataset (i.e., compute the \knn{} graph of the dataset).
\end{enumerate}

We also designed the following microbenchmarks in order to better explore the design tradeoffs that \logtree{} makes when compared to the baselines.
\begin{enumerate}[topsep=1pt,itemsep=0pt,parsep=0pt,leftmargin=15pt]
    \item Varying Batch Size (Sections \ref{exp:insert-batch} and \ref{exp:delete-batch}): we vary the batch size of the operations used to fully insert or delete the point set to measure the impact of batch size on throughput.
    \item Varying $k$ after Batched Inserts (Section~\ref{exp:knn-varyk}): we build a tree using a set of batched insertions and then measure the \knn{} search performance with a range of $k$ values to measure the impact of $k$ on throughput and the impact of dynamic updates on \knn{} performance.
    \item Mixed Insert, Delete, and \knn{} Searches (Section \ref{exp:mixed-ops}): we perform a series of batch updates (insertions and deletions) interspersed with \knn{} queries to measure the performance of the data structures over time.
\end{enumerate}

We run all of the experiments over \logtree{} and the two baselines. In addition, for the scalability and batch size experiments, we also compare the object median and spatial median splitting heuristics described in Section~\ref{sec:splitting-heuristic}.

The experiments are all run on an AWS c5.18xlarge instance with 2 Intel Xeon Platinum 8124M CPUs (3.00 GHz), for a total of 36 two-way hyper-threaded cores and 144 GB RAM. Our experiments use all hyper-threads unless specified otherwise. We compile our benchmarks with the \texttt{g++} compiler (version 9.3.0) with the \texttt{-O3} flag, and use ParlayLib~\cite{parlaylib} for parallelism. All reported running times are the medians of 3 runs, after one extra warm-up run for each experiment.

We run the experiments over 9 datasets, consisting of 6 synthetic datasets and 3 real-world datasets. We use two types of synthetic datasets. The first is \textbf{Uniform} (\textbf{U}), consisting of points distributed uniformly at random inside a bounding hyper-cube with side length $\sqrt{n}$, where $n$ is the number of points. The second is \textbf{VisualVar} (\textbf{V}), a clustered dataset with variable-density, produced by Gan and Tao's generator~\cite{gantao2015}. The generator produces points by performing a random walk in a local region, but jumping to random locations with some probability. For each of these two types, we generate them in 2D, 5D, and 7D, and for 10,000,000 points. We also use 3 real-world datasets: \textbf{10D-H-1M}~\cite{ht-dataset, ht-paper} is a 10-dimensional dataset consisting of 928,991 points of home sensor data; \textbf{16D-C-4M}~\cite{chem-dataset, chem-paper} is a 16-dimensional dataset consisting of 4,208,261 points of chemical sensor data; and \textbf{3D-C-321M}~\cite{cosmo} is a 3-dimensional dataset consisting of 321,065,547 points of astronomy data. 
Due to time constraints, we only ran experiments on \textbf{3D-C-321M} using \ourtree{} in parallel to demonstrate that \ourtree{} can scale to large datasets.

\subsection{Construction}\label{exp:construction}
In this benchmark, we measure the time required to construct a tree over each of the datasets. The results using an object median splitting heuristic are shown in Table~\ref{tab:construction-object} and the results using a spatial median splitting heuristic are shown in Table~\ref{tab:construction-spatial}. Figure~\ref{fig:construction} shows the scalability of the throughput on the 10M points 7D uniform dataset.

As we can see from the results, \textbf{BDL} achieves similar or better performance both serially and in parallel than both \textbf{B1} and \textbf{B2}, and has similar or better scalability than both. With the object median splitting heuristic, it achieves up to $34.8\times$ speedup, with an average speedup of $28.4\times$. We also note that the single-threaded runtimes are faster with the spatial-median splitting heuristic than with the object median splitting heuristic. This is expected, because spatial median only involves splitting points at each level compared with finding the median for object-median, hence it is less expensive to compute; however, we also note that the scalability for spatial median is lower because there is less work to distribute among parallel threads.

\begin{table*}

\begin{subtable}{0.5\textwidth}
\centering
\begin{tabular}{@{}l@{\hskip5pt}l@{\hskip5pt}l@{\hskip5pt}l@{\hskip5pt}l@{\hskip5pt}l@{\hskip5pt}l@{}}
\toprule
{} & \multicolumn{3}{@{}l}{1} & \multicolumn{3}{@{}l}{36h} \\
{} &     B1 &     B2 &      BDL &            B1 &           B2 &             BDL \\
\midrule
2D-U-10M &  20.6s &  16.3s &  14.4s &  0.5s (40.0x) &  3.7s (4.5x) &  0.4s (34.5x) \\
2D-V-10M &  20.5s &  16.2s &  14.2s &  0.5s (40.3x) &  3.6s (4.5x) &  0.4s (34.8x) \\
5D-U-10M &  23.3s &  20.8s &  16.3s &  0.7s (35.2x) &  4.8s (4.3x) &  0.5s (30.3x) \\
5D-V-10M &  22.8s &  20.2s &  15.8s &  0.7s (34.9x) &  4.6s (4.4x) &  0.5s (29.2x) \\
7D-U-10M &  27.0s &  24.0s &  17.1s &  0.8s (33.0x) &  5.4s (4.4x) &  0.6s (27.5x) \\
7D-V-10M &  26.2s &  23.2s &  16.5s &  0.8s (33.5x) &  5.3s (4.4x) &  0.6s (26.6x) \\
10D-H-1M &   1.5s &   1.5s &   2.8s &  0.1s (23.7x) &  0.5s (3.4x) &  0.1s (23.8x) \\
16D-C-4M &  13.5s &  14.5s &  11.0s &  0.5s (25.2x) &  3.8s (3.8x) &  0.5s (20.4x) \\
3D-C-321M & -- & -- & -- & -- & -- & 20.4s \\
\bottomrule
\end{tabular}

\vspace{-7pt}
\tablecaption{Object median.}
\label{tab:construction-object}
\end{subtable}

\begin{subtable}{0.5\textwidth}
\centering
\begin{tabular}{@{}l@{\hskip5pt}l@{\hskip5pt}l@{\hskip5pt}l@{\hskip5pt}l@{\hskip5pt}l@{\hskip5pt}l@{}}
\toprule
{} & \multicolumn{3}{@{}l}{1} & \multicolumn{3}{@{}l}{36h} \\
{} &     B1 &    B2 &      BDL &            B1 &           B2 &             BDL \\
\midrule
2D-U-10M &  10.7s &  2.3s &   5.3s &  0.5s (23.8x) &  1.6s (1.4x) &  0.4s (13.5x) \\
2D-V-10M &  10.9s &  2.5s &   5.5s &  0.5s (22.8x) &  1.7s (1.5x) &  0.4s (13.5x) \\
5D-U-10M &  13.7s &  3.4s &  6.7s &  0.8s (18.0x) &  2.3s (1.5x) &  0.6s (10.7x) \\
5D-V-10M &  14.4s &  4.0s &  7.1s &  0.8s (17.4x) &  2.6s (1.5x) &  0.7s (10.3x) \\
7D-U-10M &  16.8s &  4.4s &  7.1s &  1.0s (17.2x) &  2.9s (1.5x) &  0.8s (9.3x) \\
7D-V-10M &  17.3s &  5.2s &  8.3s &  1.0s (16.9x) &  3.3s (1.6x) &  0.9s (9.2x) \\
10D-H-1M &   1.4s &  0.8s &   3.2s &  0.1s (11.0x) &  0.5s (1.6x) &  0.2s (15.9x) \\
16D-C-4M &  10.0s &  5.3s &   7.3s &  0.7s (14.0x) &  3.0s (1.8x) &  0.7s (10.2x) \\
3D-C-321M & -- & -- & -- & -- & -- & 20.8s \\
\bottomrule
\end{tabular}

\vspace{-7pt}
\tablecaption{Spatial median.}
\label{tab:construction-spatial}
\end{subtable}

\caption{Construction times (seconds) for a single thread (1) and 36 cores with hyper-threading (36h). The self-relative speedup for each implementation and dataset is shown in parentheses.}
\end{table*}

\begin{figure*}
    \begin{subfigure}{0.5\textwidth}
        \centering
        \includegraphics[width=0.8\textwidth]{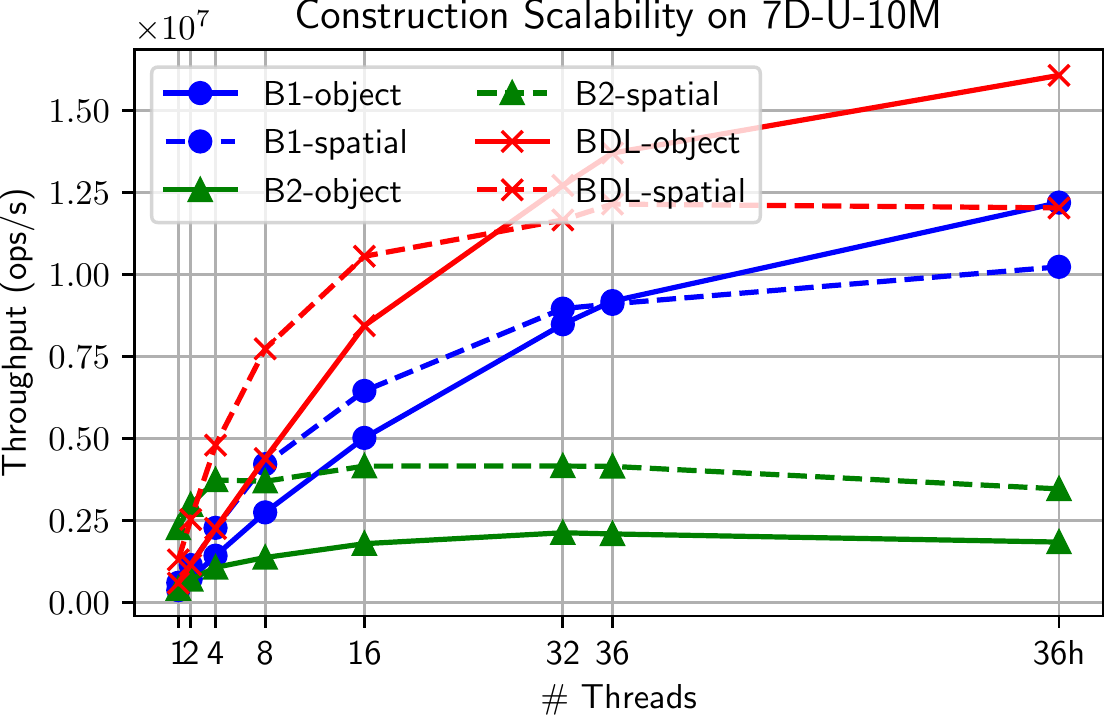}
        \caption{Construction.}
        \label{fig:construction}
    \end{subfigure}
    \begin{subfigure}{0.5\textwidth}
        \centering
        \includegraphics[width=0.8\textwidth]{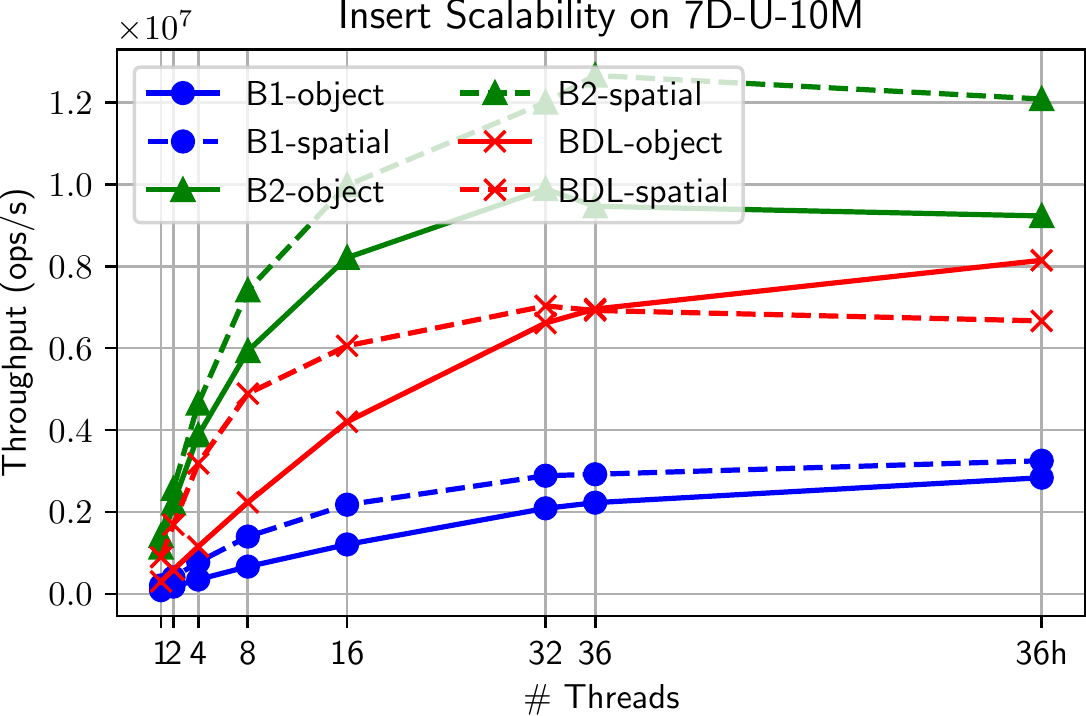}
        \caption{10\% (1M points) Batch Insertion.}
        \label{fig:insert-scalability}
    \end{subfigure}
    \begin{subfigure}{0.5\textwidth}
        \centering
        \includegraphics[width=0.8\textwidth]{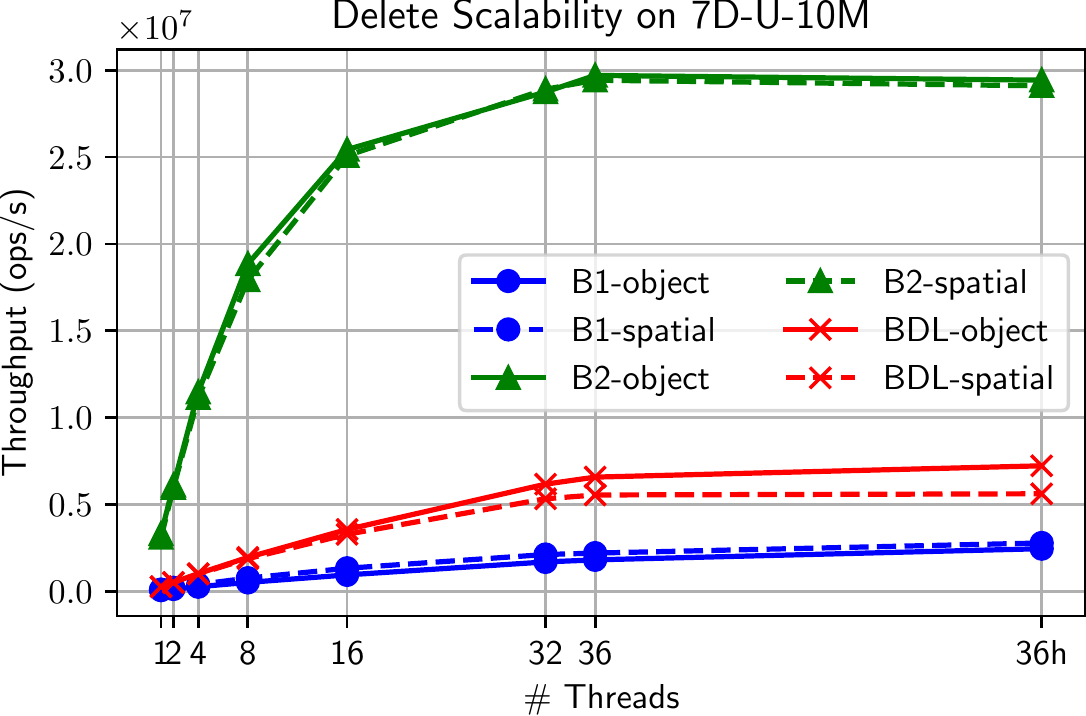}
        \caption{10\% (1M points) Batch Deletion.}
        \label{fig:delete-scalability}
    \end{subfigure}
    \begin{subfigure}{0.5\textwidth}
        \centering
        \includegraphics[width=0.8\textwidth]{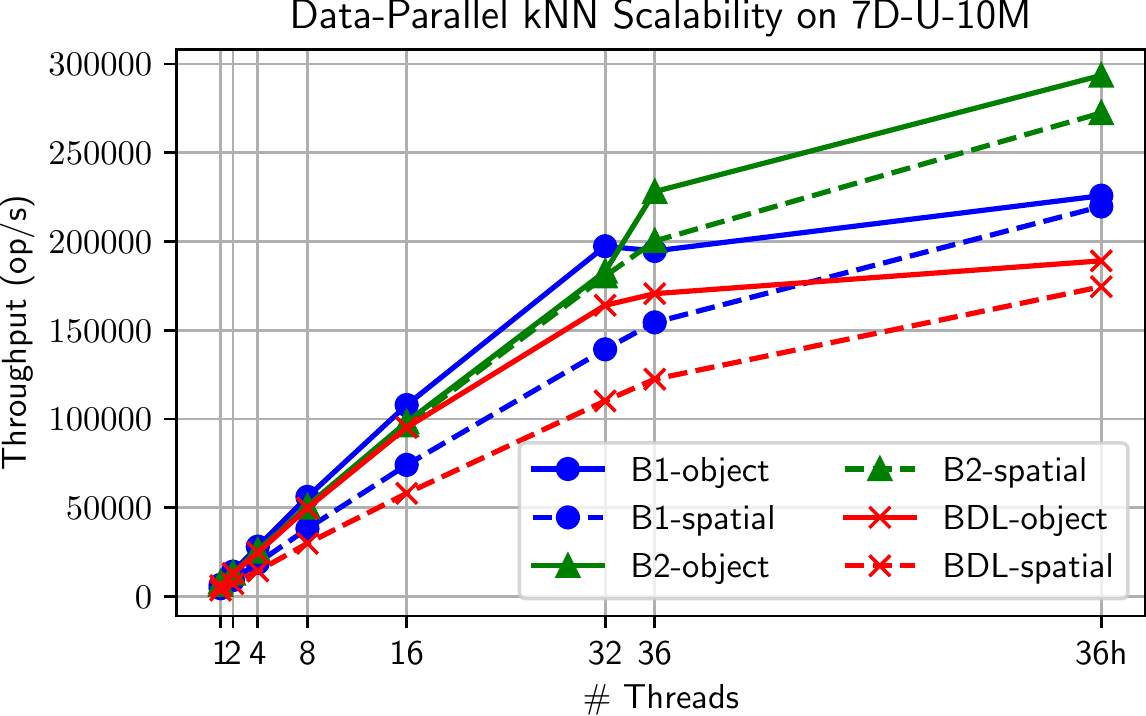}
        \caption{Full (10M points) \knn{} for $k=5$.}
        \label{fig:knn-scalability}
    \end{subfigure}
    \caption{Plot of throughput (operations per second) of batch operations over thread count for both object and spatial median implementations for the 7D-U-10M dataset.}
\end{figure*}

\subsection{Insertion}\label{exp:insertion}
In this benchmark, we measure the performance of our batch insertion implementation as compared to the baselines. We split this experiment into two separate benchmarks, one to measure the full scalability and performance of our implementation, and one to measure the impact of varying batch sizes.

\subsubsection{Scalability}\label{exp:insert-scalability}
In this benchmark, we measure the time required to insert 10 batches each containing 10\% of the points in the dataset into an initially empty tree for each of our two baselines as well as our \ourtree{}. The results using object median and spatial median splitting heuristics are shown in Tables~\ref{tab:insert-object},~\ref{tab:insert-spatial}, respectively. Figure~\ref{fig:insert-scalability} shows the scalability of the throughput on the 10M points 7D uniform dataset.

We see that \textbf{B2} achieves the best performance on batched inserts---this is due to the fact that it does not perform any extra work to maintain balance and simply directly inserts points into the existing spatial structure. \textbf{BDL} achieves the second-best performance---this is due to the fact that it does not have to rebuild the entire tree on every insert, but amortizes the rebuilding work across the batches. Finally, \textbf{B1} has the worst performance, as it must fully rebuild on every insertion. Similar to construction, we note that the spatial median heuristic performs better in the serial case but has lower scalability. With the object median splitting heuristic, \textbf{BDL} achieves parallel speedup of up to $35.5\times$, with an average speedup of $27.2\times$.

\begin{table*}

\begin{subtable}{0.5\textwidth}
\centering
\begin{tabular}{@{}l@{\hskip5pt}l@{\hskip5pt}l@{\hskip5pt}l@{\hskip5pt}l@{\hskip5pt}l@{\hskip5pt}l@{}}
\toprule
{} & \multicolumn{3}{@{}l}{1} & \multicolumn{3}{@{}l}{36h} \\
{} &      B1 &    B2 &     BDL &            B1 &            B2 &             BDL \\
\midrule
2D-U-10M &   86.1s &  5.9s &  24.8s &  2.2s (39.5x) &  0.6s (10.1x) &  0.7s (35.4x) \\
2D-V-10M &   86.1s &  5.9s &  24.4s &  2.2s (39.6x) &  0.6s (10.2x) &  0.7s (35.5x) \\
5D-U-10M &   97.9s &  7.6s &  29.2s &  2.9s (33.6x) &   0.9s (8.7x) &  1.0s (30.3x) \\
5D-V-10M &   94.5s &  7.5s &  28.1s &  2.8s (33.2x) &   0.9s (8.8x) &  1.0s (29.3x) \\
7D-U-10M &  109.7s &  8.8s &  33.0s &  3.5s (31.1x) &   1.1s (8.1x) &  1.2s (26.9x) \\
7D-V-10M &  106.1s &  8.7s &  31.7s &  3.5s (30.7x) &   1.1s (8.2x) &  1.2s (25.6x) \\
10D-H-1M &    7.9s &  0.7s &   1.7s &  0.4s (22.1x) &   0.1s (5.7x) &  0.1s (16.3x) \\
16D-C-4M &   66.2s &  5.5s &  21.1s &  3.0s (22.2x) &   0.9s (6.4x) &  1.1s (18.3x) \\
3D-C-321M & -- & -- & -- & -- & -- & 20.7s \\
\bottomrule
\end{tabular}

\vspace{-7pt}
\tablecaption{Object median.}
\label{tab:insert-object}
\end{subtable}

\begin{subtable}{0.5\textwidth}
\centering
\begin{tabular}{@{}l@{\hskip5pt}l@{\hskip5pt}l@{\hskip5pt}l@{\hskip5pt}l@{\hskip5pt}l@{\hskip5pt}l@{}}
\toprule
{} & \multicolumn{3}{@{}l}{1} & \multicolumn{3}{@{}l}{36h} \\
{} &     B1 &    B2 &      BDL &            B1 &            B2 &             BDL \\
\midrule
2D-U-10M &  32.5s &  4.9s &  5.2s &  1.2s (26.1x) &  0.4s (11.9x) &  0.6s (9.2x) \\
2D-V-10M &  33.9s &  5.0s &  5.8s &  1.4s (24.0x) &  0.5s (10.2x) &  0.6s (9.2x) \\
5D-U-10M &  40.7s &  6.1s &  8.8s &  2.3s (17.5x) &   0.7s (9.4x) &  1.1s (8.1x) \\
5D-V-10M &  44.5s &  6.7s &  10.0s &  2.8s (15.6x) &   0.8s (8.0x) &  1.2s (8.0x) \\
7D-U-10M &  48.1s &  7.0s &  11.2s &  3.1s (15.6x) &   0.8s (8.5x) &  1.5s (7.4x) \\
7D-V-10M &  52.6s &  7.6s &  12.5s &  3.7s (14.4x) &   1.0s (7.8x) &  1.6s (7.6x) \\
10D-H-1M &   6.3s &  0.9s &   1.3s &  0.6s (10.2x) &   0.2s (3.8x) &   0.2s (6.6x) \\
16D-C-4M &  42.9s &  6.0s &  12.9s &  3.6s (11.8x) &   1.0s (5.9x) &  1.5s (8.5x) \\
3D-C-321M & -- & -- & -- & -- & -- & 15.7s \\
\bottomrule
\end{tabular}

\vspace{-7pt}
\tablecaption{Spatial median.}
\label{tab:insert-spatial}
\end{subtable}

\caption{
Batch insertion times (seconds) for a single thread (1) and 36 cores with hyper-threading (36h). The self-relative speedup for each implementation and dataset is shown in parentheses.
We insert batches of 10\% of each dataset, starting from an empty tree until the entire dataset has been inserted.}
\end{table*}

\subsubsection{Batch Size}\label{exp:insert-batch}
In this benchmark, we measure the performance of our batch insertion implementation as the size of the batch varies from 1M points to 5M points. We repeatedly perform batched inserts of the specified size until the entire dataset has been inserted. We provide plots of the results for the 2D VisualVar dataset in Figure~\ref{fig:batch-insert-2dv} and for the 7D Uniform dataset in Figure~\ref{fig:batch-insert-7du}. The first striking result is that the throughput decreases for \textbf{B2} as the batch size increases, while the throughput increases for \textbf{B1} and \textbf{BDL}. This is due to the fact that the work that \textbf{B2} performs increases as the batch size grows---it has to do more work to compute spatial splits over larger batches, whereas for small batches it quickly computes the upper spatial splits on the first batch and never recomputes them. As a result, \textbf{B2} has best throughput at smaller batch sizes, but the worst at large batch sizes. For \textbf{B1} and \textbf{BDL}, note that the throughput increases as the batch size increases. This is due to the fact that each insert has an associated overhead of recomputing spatial partitions, and so the larger the batch size, the fewer times this overhead is paid. For larger batch sizes, \textbf{BDL} has the best throughput among the three implementations for object median. This is again due to the fact that it amortizes the work across inserts, rather than having to recompute spatial splits over the entire dataset at each insert. Finally, note that in most cases, the spatial median heuristic has a better throughput than its object median counterpart, as it takes less work to compute. 

\begin{figure*}
    \begin{subfigure}{0.5\textwidth}
        \centering
        \includegraphics[width=0.8\textwidth]{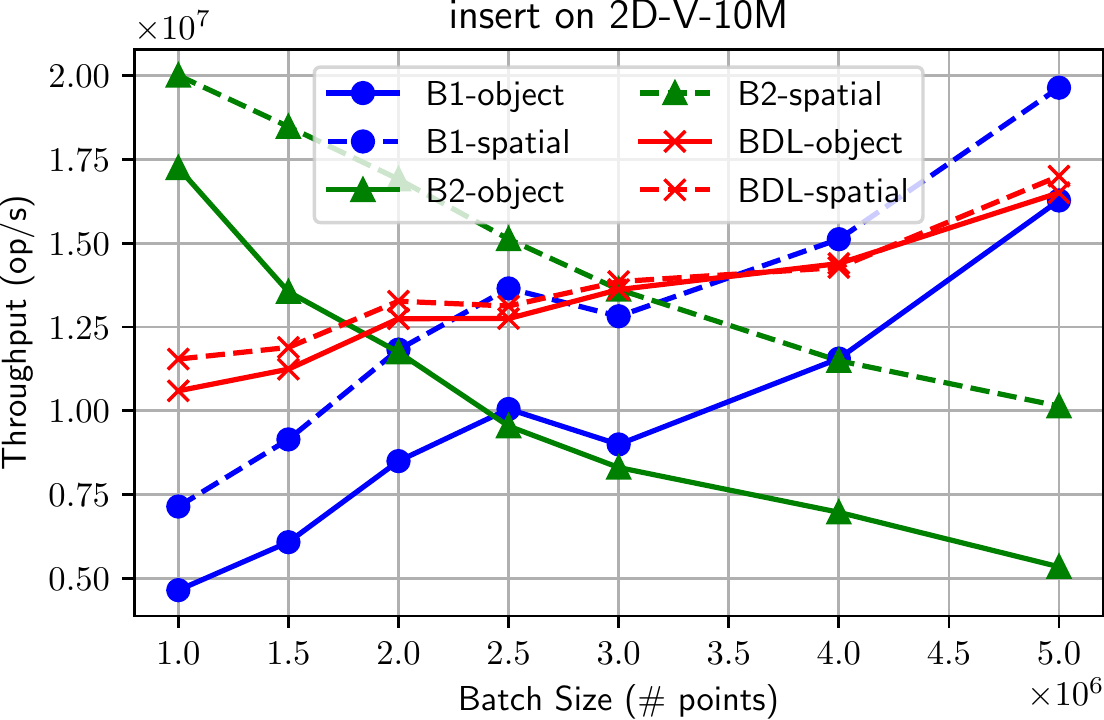}
        \caption{Insertions on 2D-V-10M.}
        \label{fig:batch-insert-2dv}
    \end{subfigure}
    \begin{subfigure}{0.5\textwidth}
        \centering
        \includegraphics[width=0.8\textwidth]{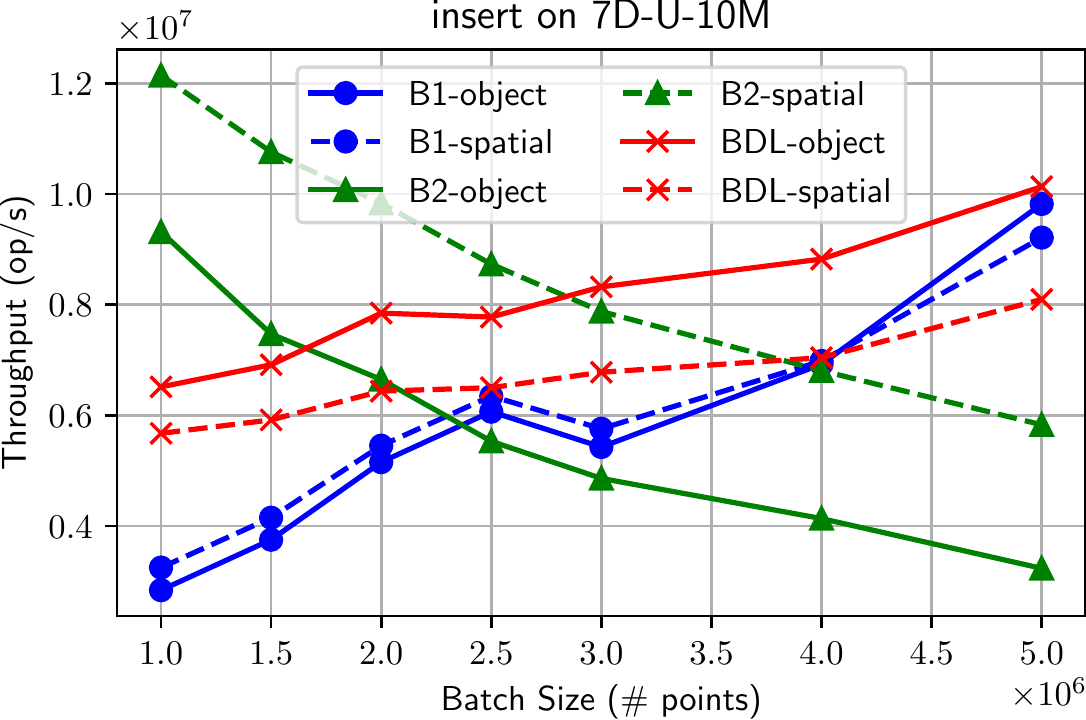}
        \caption{Insertions on 7D-U-10M.}
        \label{fig:batch-insert-7du}
    \end{subfigure}
    \begin{subfigure}{0.5\textwidth}
        \centering
        \includegraphics[width=0.8\textwidth]{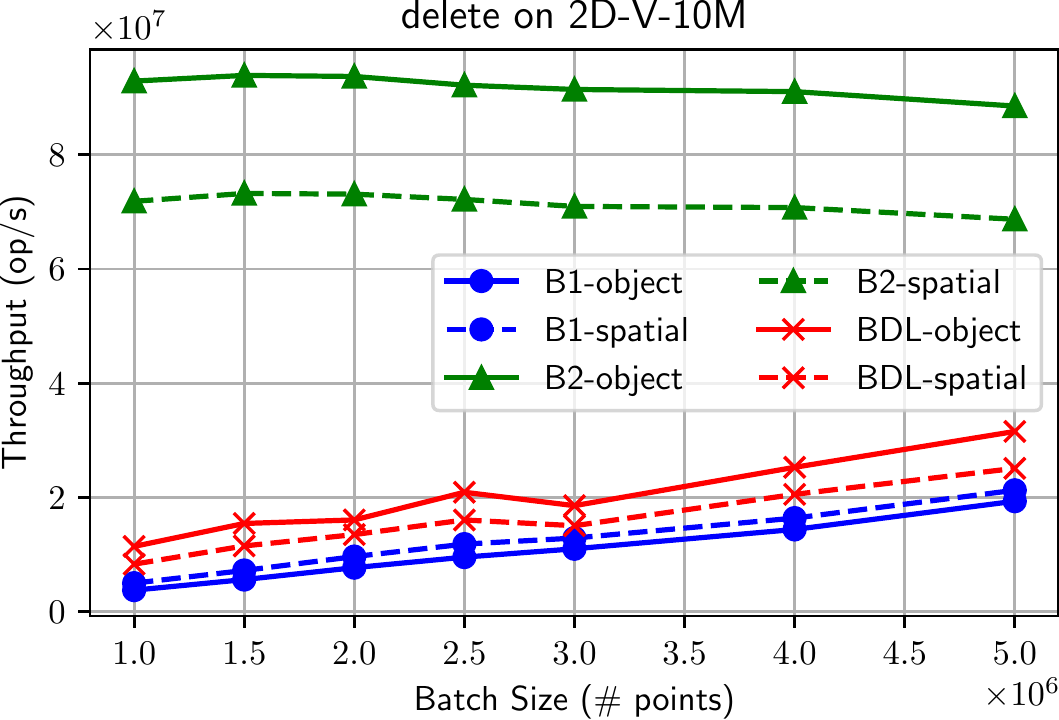}
        \caption{Deletions on 2D-V-10M.}
        \label{fig:batch-delete-2dv}
    \end{subfigure}
    \begin{subfigure}{0.5\textwidth}
        \centering
        \includegraphics[width=0.8\textwidth]{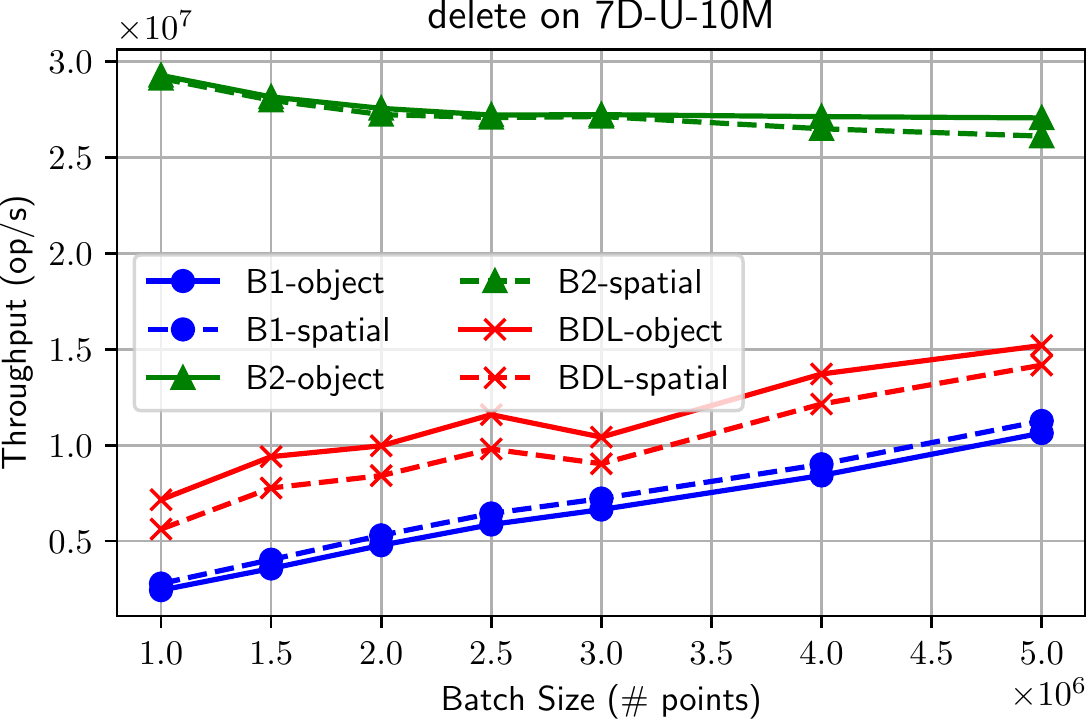}
        \caption{Deletions on 7D-U-10M.}
        \label{fig:batch-delete-7du}
    \end{subfigure}
    \caption{Plot of throughput (operations per second) of batch insertions and batch deletions vs. batch size for the 2D-V-10M and 7D-U-10M datasets.}
\end{figure*}

\subsection{Deletion}\label{exp:deletion}
In this benchmark, we measure the performance of our deletion implementation as compared to the baselines. Similar to the insertion experiment, we split this experiment into two separate benchmarks, one to measure the scalability and performance of our implementation, and one to measure the impact of varying batch size.

\subsubsection{Scalability}\label{exp:delete-scalability}
In this benchmark, we measure the time required to delete 10 batches each containing 10\% of the points in the dataset from an initially full tree for each of our two baselines as well as the \logtree{}. The results using an object median splitting heuristic are shown in Table~\ref{tab:delete-object} and the results using a spatial median splitting heuristic are shown in Table~\ref{tab:delete-spatial}. Figure~\ref{fig:delete-scalability} shows the scalability of the throughput on the 10M point 7D uniform dataset.

We observe that \textbf{B2} has vastly superior performance---it does almost no work other than tombstoning the deleted points so it is extremely efficient. Next, we see that \textbf{BDL} has the second-best performance, as it amortizes the rebuilding across the batches, rather than having to rebuild across the entire point set for every delete. Finally, \textbf{B1} has the worst performance as it rebuilds on every delete. With the object median splitting heuristic, \textbf{BDL} achieves parallel speedup of up to $33.1\times$, with an average speedup of $28.5\times$.

\begin{table*}

\begin{subtable}{0.5\textwidth}
\centering
\begin{tabular}{@{}l@{\hskip5pt}l@{\hskip5pt}l@{\hskip5pt}l@{\hskip5pt}l@{\hskip5pt}l@{\hskip5pt}l@{}}
\toprule
{} & \multicolumn{3}{@{}l}{1} & \multicolumn{3}{@{}l}{36h} \\
{} &      B1 &    B2 &      BDL &            B1 &            B2 &             BDL \\
\midrule
2D-U-10M &  114.0s &  1.1s &  29.2s &  2.7s (43.0x) &  0.1s (10.4x) &  0.9s (33.0x) \\
2D-V-10M &  114.2s &  1.1s &  29.0s &  2.7s (43.0x) &  0.1s (10.5x) &  0.9s (33.1x) \\
5D-U-10M &  130.2s &  2.2s &  40.1s &  3.5s (37.2x) &   0.2s (9.1x) &  1.4s (28.1x) \\
5D-V-10M &  127.0s &  2.2s &  39.3s &  3.5s (36.8x) &   0.2s (9.1x) &  1.4s (27.8x) \\
7D-U-10M &  146.7s &  3.0s &  37.3s &  4.1s (36.0x) &   0.3s (8.9x) &  1.4s (27.0x) \\
7D-V-10M &  144.4s &  3.0s &  36.9s &  4.1s (35.6x) &   0.3s (8.9x) &  1.4s (27.2x) \\
10D-H-1M &   24.7s &  0.2s &  21.9s &  1.0s (24.7x) &   0.0s (5.6x) &  0.8s (28.5x) \\
16D-C-4M &   74.8s &  2.9s &  33.1s &  2.7s (27.4x) &   0.3s (8.6x) &  1.4s (23.6x) \\
3D-C-321M & -- & -- & -- & -- & -- & 17.3s \\
\bottomrule
\end{tabular}

\vspace{-7pt}
\tablecaption{Object median.}
\label{tab:delete-object}
\end{subtable}

\begin{subtable}{0.5\textwidth}
\centering
\begin{tabular}{@{}l@{\hskip5pt}l@{\hskip5pt}l@{\hskip5pt}l@{\hskip5pt}l@{\hskip5pt}l@{\hskip5pt}l@{}}
\toprule
{} & \multicolumn{3}{@{}l}{1} & \multicolumn{3}{@{}l}{36h} \\
{} &      B1 &    B2 &      BDL &            B1 &            B2 &             BDL \\
\midrule
2D-U-10M &   70.2s &  1.3s &  30.2s &  1.8s (38.8x) &  0.1s (11.8x) &  1.1s (26.3x) \\
2D-V-10M &   71.7s &  1.5s &  30.7s &  2.0s (36.0x) &  0.1s (10.7x) &  1.2s (25.4x) \\
5D-U-10M &   82.9s &  2.3s &  29.2s &  2.9s (28.9x) &   0.2s (9.6x) &  1.2s (23.9x) \\
5D-V-10M &   85.9s &  2.9s &  29.6s &  3.4s (25.4x) &   0.3s (9.0x) &  1.3s (22.2x) \\
7D-U-10M &   97.1s &  3.2s &  38.2s &  3.6s (27.1x) &   0.3s (9.4x) &  1.8s (21.4x) \\
7D-V-10M &  100.8s &  3.9s &  38.3s &  4.2s (24.0x) &   0.4s (8.8x) &  1.8s (20.9x) \\
10D-H-1M &   27.6s &  0.4s &  22.1s &  1.1s (25.1x) &   0.1s (4.8x) &  0.8s (26.2x) \\
16D-C-4M &   60.5s &  4.0s &  32.5s &  3.3s (18.3x) &   0.5s (8.8x) &  1.7s (19.5x) \\
3D-C-321M & -- & -- & -- & -- & -- & -- \\
\bottomrule
\end{tabular}

\vspace{-7pt}
\tablecaption{Spatial median.}
\label{tab:delete-spatial}
\end{subtable}

\caption{Batch deletion times (seconds) for a single thread (1) and 36 cores with hyper-threading (36h). The self-relative speedup for each implementation and dataset is shown in parentheses.
We delete batches of 10\% of each dataset, starting from a full tree until the entire dataset has been deleted.}
\end{table*}

\subsubsection{Batch Size}\label{exp:delete-batch}
In this benchmark, we measure the performance of our batch deletion implementation as the size of the batched update varies from 1M points to 5M points. We provide plots of the results for the 2D VisualVar and 7D Uniform datasets in Figures~\ref{fig:batch-delete-2dv},~\ref{fig:batch-delete-7du}, respectively. We note again that \textbf{B2} consistently has the highest throughput, with \textbf{BDL} in second and \textbf{B1} with the lowest throughput. Furthermore, the throughput of \textbf{B1} and \textbf{BDL} increases as the batch size increases; this is true for the same reasons as with insertions. For \textbf{B2}, we observe consistent throughput across batch sizes.

\subsection{Data-Parallel \knn}
In this benchmark, we measure the performance and scalability of our \knn{} implementation as compared to the baselines. We split this into three separate experiments. 

\subsubsection{Scalability}\label{exp:knn-scalability}
In this experiment, we measure the scalability of the \knn{} operation after constructing each data structure over the entire dataset (in a single batch).  The results using an object median splitting heuristic are shown in Table~\ref{tab:knn-object} and the results using a spatial median splitting heuristic are shown in Table~\ref{tab:knn-spatial}. Figure~\ref{fig:knn-scalability} shows the scalability of the throughput on the 10M point 7D uniform dataset. With the object median heuristic, \textbf{BDL} achieves a parallel speedup of up to $46.1\times$, with an average speedup of $40.0\times$.

The results show that \textbf{B1} and \textbf{B2} have similar performance (\textbf{B2} is slightly faster due to implementation differences). Furthermore, they are both faster than \logtree{}. This is to be expected, because the \knn{} operation is performed directly over the tree after it is constructed over the entire dataset in a single batch. Thus, both baselines will consist of fully balanced trees and will be able to perform very efficient \knn{} queries. On the other hand, \textbf{BDL} consists of a set of balanced trees, which adds overhead to the \knn{} operation, as it must be performed separately on each of these individual trees. However, as the next two benchmarks show, \textbf{BDL} provides superior performance in the case of a mixed set of dynamic batch inserts and deletes interspersed with \knn{} queries. 

\begin{table*}

\begin{subtable}{0.5\textwidth}
\centering
\begin{tabular}{@{}l@{\hskip1.5pt}l@{\hskip2.5pt}l@{\hskip2.5pt}l@{\hskip2.5pt}l@{\hskip2.5pt}l@{\hskip1.5pt}l@{}}
\toprule
{} & \multicolumn{3}{@{}l}{1} & \multicolumn{3}{@{}l}{36h} \\
{} &       B1 &       B2 &        BDL &             B1 &             B2 &              BDL \\
\midrule
2D-U-10M &    34.9s &    11.9s &    64.5s &   0.6s (57.2x) &   0.3s (40.3x) &   1.5s (43.1x) \\
2D-V-10M &    37.2s &    12.7s &    67.0s &   0.7s (57.0x) &   0.3s (40.7x) &   1.5s (43.7x) \\
5D-U-10M &   302.3s &   178.0s &   339.4s &   6.3s (47.6x) &   3.5s (51.4x) &   8.6s (39.4x) \\
5D-V-10M &   109.5s &    64.1s &   145.8s &   2.1s (51.1x) &   1.2s (52.7x) &   3.2s (46.1x) \\
7D-U-10M &  1520.0s &  1239.4s &  1621.6s &  44.3s (34.3x) &  34.1s (36.4x) &  52.9s (30.7x) \\
7D-V-10M &   133.6s &    86.3s &   173.1s &   2.8s (48.1x) &   1.7s (50.7x) &   4.0s (43.8x) \\
10D-H-1M &     5.3s &     5.5s &    11.9s &   0.1s (54.5x) &   0.1s (50.5x) &   0.3s (45.8x) \\
16D-C-4M &   464.4s &   612.2s &   468.3s &  16.5s (28.1x) &  16.1s (38.1x) &  17.3s (27.1x) \\
3D-C-321M & -- & -- & -- & -- & -- & 15.6s \\
\bottomrule
\end{tabular}

\vspace{-7pt}
\tablecaption{Object median.}
\label{tab:knn-object}
\end{subtable}

\begin{subtable}{0.5\textwidth}
\centering
\begin{tabular}{@{}l@{\hskip1.5pt}l@{\hskip2.5pt}l@{\hskip2.5pt}l@{\hskip2.5pt}l@{\hskip2.5pt}l@{\hskip1.5pt}l@{}}
\toprule
{} & \multicolumn{3}{@{}l}{1} & \multicolumn{3}{@{}l}{36h} \\
{} &       B1 &       B2 &        BDL &             B1 &             B2 &              BDL \\
\midrule
2D-U-10M &    33.6s &    13.0s &    69.9s &   0.6s (55.6x) &   0.3s (40.0x) &   1.6s (44.9x) \\
2D-V-10M &    35.7s &    13.5s &    72.7s &   0.6s (56.8x) &   0.3s (39.5x) &   1.6s (44.9x) \\
5D-U-10M &   348.8s &   216.7s &   503.1s &   6.7s (52.2x) &   4.0s (54.1x) &   9.6s (52.2x) \\
5D-V-10M &   103.1s &    65.2s &   174.2s &   2.0s (52.8x) &   1.4s (45.4x) &   3.5s (49.5x) \\
7D-U-10M &  2142.0s &  1494.0s &  2804.0s &  45.5s (47.1x) &  36.7s (40.7x) &  57.3s (48.9x) \\
7D-V-10M &   116.9s &    81.3s &   203.1s &   2.3s (50.5x) &   1.7s (48.0x) &   4.1s (49.3x) \\
10D-H-1M &     5.7s &     4.7s &    15.5s &   0.1s (54.6x) &   0.1s (45.2x) &   0.3s (47.3x) \\
16D-C-4M &   606.7s &   557.0s &   623.4s &  19.9s (30.5x) &  11.9s (47.0x) &  20.4s (30.5x) \\
3D-C-321M & -- & -- & -- & -- & -- & 17.4s \\
\bottomrule
\end{tabular}

\vspace{-7pt}
\tablecaption{Spatial median.}
\label{tab:knn-spatial}
\end{subtable}

\caption{
\knn{} times (seconds) for a single thread (1) and 36 cores with hyper-threading (36h). The self-relative speedup for each implementation and dataset is shown in parentheses.
We use 100\% of dataset except for 3D-C-321M, which was run with 10\% of the dataset because the full \knn{} results could not fit in memory. }
\end{table*}

\subsubsection{Effect of Varying $k$}\label{exp:knn-varyk}
In this experiment, we benchmark the throughput of the \knn{} operation on 36 cores with hyper-threading as $k$ varies from 2 to 11. For all three trees, we perform the \knn{} operation after building the tree from a set of batch insertions, with a batch size of 5\% of the dataset, until the entire dataset is inserted. The results are shown for the 2D VisualVar dataset in Figure~\ref{fig:varyk-knn-2dv} and for the 7D Uniform dataset in Figure~\ref{fig:varyk-knn-7du}. In both scenarios, we see that \textbf{B1} has the best \knn{} performance, followed closely by \textbf{BDL}. \textbf{B2} has significantly worse performance---this is because the construction of the tree was performed with a set of batch insertions, rather than a single construction over the entire dataset, the tree ends up imbalanced and the \knn{} query performance suffers.

\begin{figure*}
    \begin{subfigure}{0.5\textwidth}
        \centering
        \includegraphics[width=0.8\textwidth]{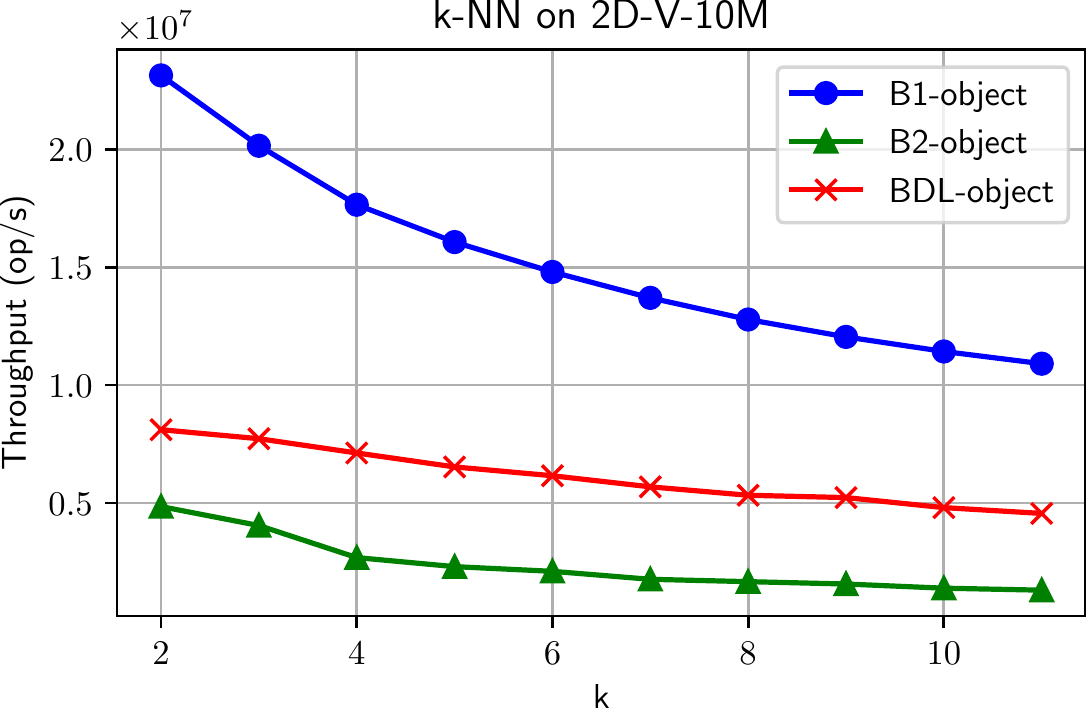}
        \caption{\knn{} on 2D-V-10M.}
        \label{fig:varyk-knn-2dv}
    \end{subfigure}
    \begin{subfigure}{0.5\textwidth}
        \centering
        \includegraphics[width=0.8\textwidth]{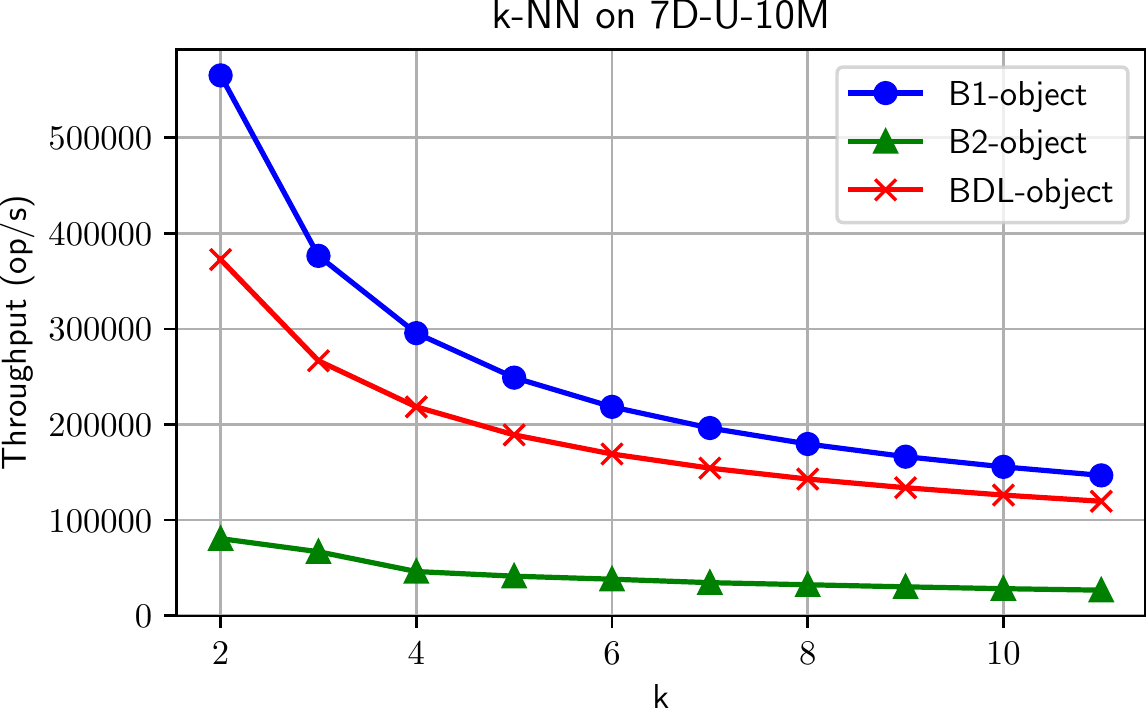}
        \caption{\knn{} on 7D-U-10M.}
        \label{fig:varyk-knn-7du}
    \end{subfigure}
    \caption{Plots of \knn{} throughput (operations per second) vs. $k$ using all 36h cores with hyperthreading,
    for the 2D-V-10M and 7D-U-10M datasets.}
\end{figure*}

\subsubsection{Mixed Operations}\label{exp:mixed-ops}
In this final experiment, we measure the \knn{} and overall performance of the trees as a mixed set of batch insertions, batch deletions, and batch \knn{} queries are performed. In particular, we perform a set of 20 batch insertions, each consisting of 5\% of the dataset, into the tree. After every 5 batch insertions, we perform a \knn{} query with $k=5$ (over the entire dataset) to measure the current query performance. Then, we perform a set of 15 batch deletes, each consisting of a random 5\% of the dataset (with no repeats). After every 5 batch deletes, we again perform a \knn{} ($k=5$) query. Overall, there are 7 \knn{} queries. We thus split the experiment into 7 distinct sections, demarcated by the \knn{} queries. For each section (labeled as INS0, INS1, INS2, INS3, DEL0, DEL1, DEL2), we measure the \knn{} runtime as well as the time for the 5 batched insertion/deletion operations. The results are shown for the 2D VisualVar dataset in Figure~\ref{fig:dynamic-ops-2dv} and for the 5d Uniform dataset in Figure~\ref{fig:dynamic-ops-5du} (we observed similar results for other datasets). The $x$-axis shows the 7 sections, and the $y$-axis shows the time for each of these sections. There are two lines for each tree---a dashed line indicating just the \knn{} times at each section, and a solid line indicating the total time for the batched update and \knn{}. 

In the Uniform case, we see that \textbf{B2} performs the worst overall, due almost entirely to its poor \knn{} performance after batched updates. Note that the batched updates themselves contribute minimally to the total runtime of this baseline---they are very fast but cause significant imbalance in the tree structure, leading to degraded query performance. \textbf{B1} has the best raw \knn{} query time, but its overall runtime is the second worst, as the batched updates are quite expensive (in order to maintain the balance that results in fast query times). Finally, \textbf{BDL} represents the best tradeoff between dynamic batch updates and \knn{} performance. In particular, it has the best total runtime after every operation. The results are quite similar for the VisualVar case, except that we note that \textbf{B1} has the overall worst runtime, due to very expensive updates.

\begin{figure*}
    \begin{subfigure}{0.5\textwidth}
        \centering
        \includegraphics[width=0.8\textwidth]{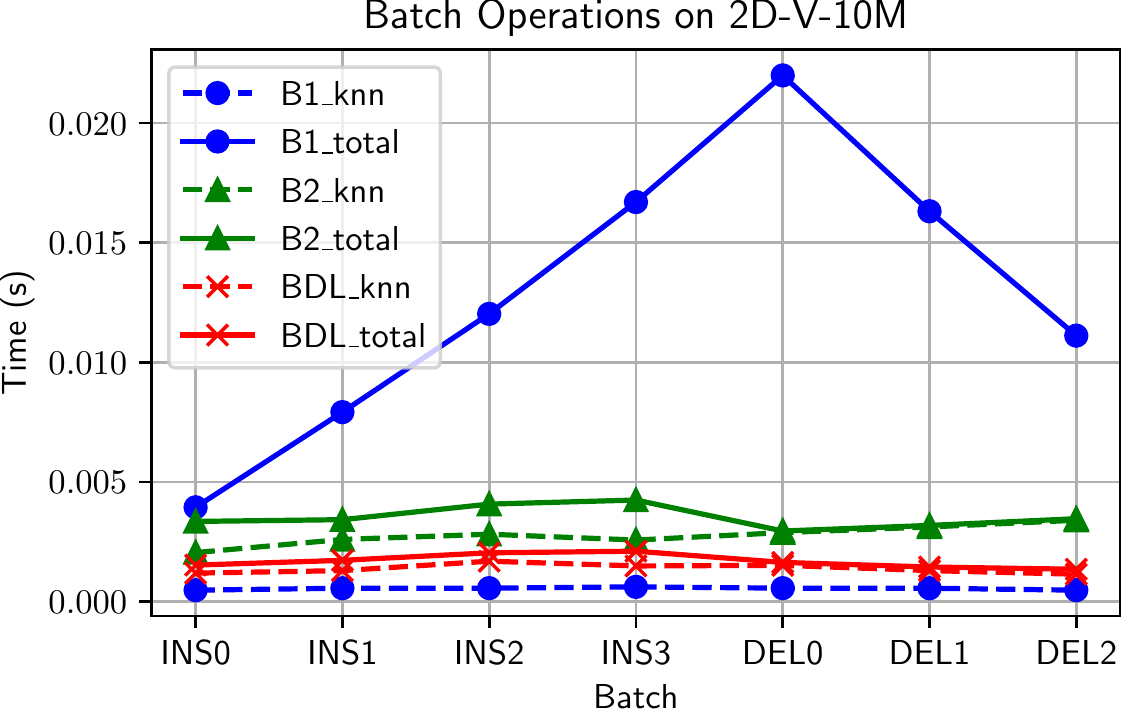}
        \caption{Batch updates and \knn{} queries on 2D-V-10M.}
        \label{fig:dynamic-ops-2dv}
    \end{subfigure}
    \begin{subfigure}{0.5\textwidth}
        \centering
        \includegraphics[width=0.8\textwidth]{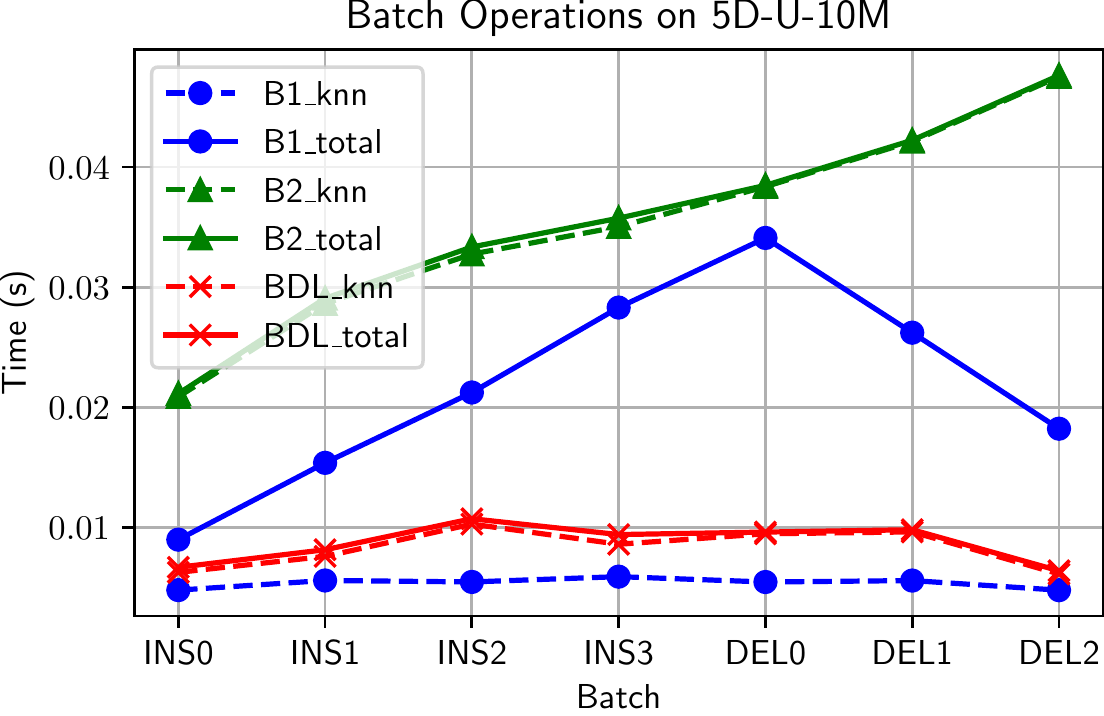}
        \caption{Batch updates and \knn{} queries on 7D-U-10M.}
        \label{fig:dynamic-ops-5du}
    \end{subfigure}
    \caption{Plots of running times (seconds) of updates and queries vs. progressive batch updates on
    the tree using all 36 cores with hyper-threading, for the 2D-V-10M and 7D-U-10M datasets. ``knn'' 
    represents the \knn{} query performance after cumulative updates on the trees and ``total'' 
    represents the combined running time of both updates and queries since the previous batch.}
\end{figure*}

\subsubsection{Dual-Tree \knn}\label{exp:dualtree-knn}
In addition to the data-parallel approach discussed above, we also implemented \knn{} based on the 
dual-tree traversal proposed by March~\etal~\cite{esmt}.
While we found our dual-tree \knn{} for \textbf{BDL} was faster on a single-thread due to
more efficient pruning, it was not as scalable as other \knn{} approaches and thus did not perform
as well in parallel. Therefore, we omit the experimental results for dual-tree \knn{}.

\section{Conclusion}
We have presented the \ourtree{}, a parallel batch-dynamic \kdtree{} that supports batched construction, insertions, deletions, and \knn{} queries. 
We show that our data structure has strong theoretical work and depth bounds. Furthermore, our experiments show that the \ourtree{} achieves good parallel speedup and presents a useful tradeoff between two baseline implementations. In particular, it delivers the best performance in a dynamic setting involving batched updates to the underlying dataset interspersed with \knn{} queries.

\section{Acknowledgements}
This research was supported by 
DOE Early Career Award \#DE-SC0018947,
NSF CAREER Award \#CCF-1845763, Google Faculty Research Award, Google Research Scholar Award, DARPA
SDH Award \#HR0011-18-3-0007, and Applications Driving Architectures
(ADA) Research Center, a JUMP Center co-sponsored by SRC and DARPA.

\bibliographystyle{IEEEtran}
\bibliography{refs}

\begin{thebibliography}{10}
\providecommand{\url}[1]{#1}
\csname url@samestyle\endcsname
\providecommand{\newblock}{\relax}
\providecommand{\bibinfo}[2]{#2}
\providecommand{\BIBentrySTDinterwordspacing}{\spaceskip=0pt\relax}
\providecommand{\BIBentryALTinterwordstretchfactor}{4}
\providecommand{\BIBentryALTinterwordspacing}{\spaceskip=\fontdimen2\font plus
\BIBentryALTinterwordstretchfactor\fontdimen3\font minus
  \fontdimen4\font\relax}
\providecommand{\BIBforeignlanguage}[2]{{%
\expandafter\ifx\csname l@#1\endcsname\relax
\typeout{** WARNING: IEEEtran.bst: No hyphenation pattern has been}%
\typeout{** loaded for the language `#1'. Using the pattern for}%
\typeout{** the default language instead.}%
\else
\language=\csname l@#1\endcsname
\fi
#2}}
\providecommand{\BIBdecl}{\relax}
\BIBdecl

\bibitem{bentley1975}
J.~L. Bentley, ``Multidimensional binary search trees used for associative
  searching,'' \emph{Commun. ACM}, vol.~18, no.~9, p. 509–517, Sep. 1975.

\bibitem{bentley-logarithmic-1}
------, ``Decomposable searching problems,'' \emph{Information Processing
  Letters}, vol.~8, no.~5, pp. 244--251, 1979.

\bibitem{bentley-logarithmic-2}
J.~L. Bentley and J.~B. Saxe, ``Decomposable searching problems {I}.
  static-to-dynamic transformation,'' \emph{Journal of Algorithms}, vol.~1,
  no.~4, pp. 301--358, 1980.

\bibitem{pankaj-co}
P.~K. Agarwal, L.~Arge, A.~Danner, and B.~Holland-Minkley, ``Cache-oblivious
  data structures for orthogonal range searching,'' in \emph{Proceedings of the
  Nineteenth Annual Symposium on Computational Geometry}, 2003, p. 237–245.

\bibitem{agarwal2016parallel}
P.~K. Agarwal, K.~Fox, K.~Munagala, and A.~Nath, ``Parallel algorithms for
  constructing range and nearest-neighbor searching data structures,'' in
  \emph{Proceedings of the 35th ACM SIGMOD-SIGACT-SIGAI Symposium on Principles
  of Database Systems}, 2016, p. 429–440.

\bibitem{curtin-faster-dualtree}
R.~R. Curtin, ``Faster dual-tree traversal for nearest neighbor search,'' in
  \emph{Proceedings of the 8th International Conference on Similarity Search
  and Applications}, 2015, p. 77–89.

\bibitem{wehr2018gpu}
D.~Wehr and R.~Radkowski, ``Parallel kd-tree construction on the {GPU} with an
  adaptive split and sort strategy,'' \emph{Int. J. Parallel Program.},
  vol.~46, no.~6, p. 1139–1156, Dec. 2018.

\bibitem{zellmann2019binned}
S.~Zellmann, J.~P. Schulze, and U.~Lang, ``Binned k-d tree construction for
  sparse volume data on multi-core and {GPU} systems,'' \emph{IEEE Transactions
  on Visualization and Computer Graphics}, vol.~27, no.~3, pp. 1904--1915,
  2021.

\bibitem{bkd}
O.~Procopiuc, P.~K. Agarwal, L.~Arge, and J.~S. Vitter, ``Bkd-tree: A dynamic
  scalable kd-tree,'' in \emph{International Symposium on Spatial and Temporal
  Databases}.\hskip 1em plus 0.5em minus 0.4em\relax Springer, 2003, pp.
  46--65.

\bibitem{finkel1974quad}
R.~A. Finkel and J.~L. Bentley, ``Quad trees a data structure for retrieval on
  composite keys,'' \emph{Acta Informatica}, vol.~4, no.~1, pp. 1--9, 1974.

\bibitem{omohundro1989five}
S.~M. Omohundro, \emph{Five balltree construction algorithms}.\hskip 1em plus
  0.5em minus 0.4em\relax International Computer Science Institute Berkeley,
  1989.

\bibitem{beygelzimer2006cover}
A.~Beygelzimer, S.~Kakade, and J.~Langford, ``Cover trees for nearest
  neighbor,'' in \emph{Proceedings of the 23rd international conference on
  Machine learning}, 2006, pp. 97--104.

\bibitem{guttman1984r}
A.~Guttman, ``R-trees: A dynamic index structure for spatial searching,'' in
  \emph{Proceedings of the ACM SIGMOD international conference on Management of
  data}, 1984, pp. 47--57.

\bibitem{shevtsov2007parallelkd}
M.~Shevtsov, A.~Soupikov, and A.~Kapustin, ``Highly parallel fast {KD}-tree
  construction for interactive ray tracing of dynamic scenes,'' \emph{Computer
  Graphics Forum}, vol.~26, no.~3, pp. 395--404, 2007.

\bibitem{oneil1996lsm}
P.~O’Neil, E.~Cheng, D.~Gawlick, and E.~O’Neil, ``The log-structured
  merge-tree ({LSM}-tree),'' \emph{Acta Informatica}, vol.~33, no.~4, pp.
  351--385, 1996.

\bibitem{parallel-lsm}
R.~A. Vitillo, ``Parallel log-structure merge tree for key-value stores,''
  \url{https://github.com/vitillo/kvstore}.

\bibitem{dobson2021parallel}
\BIBentryALTinterwordspacing
M.~Dobson and G.~E. Blelloch, ``Parallel nearest neighbors in low dimensions
  with batch updates,'' \emph{CoRR}, vol. abs/2111.04182, 2021. [Online].
  Available: \url{https://arxiv.org/abs/2111.04182}
\BIBentrySTDinterwordspacing

\bibitem{friedman1977-kdtree-nn}
J.~H. Friedman, J.~L. Bentley, and R.~A. Finkel, ``An algorithm for finding
  best matches in logarithmic expected time,'' \emph{ACM Transactions on
  Mathematical Software (TOMS)}, vol.~3, no.~3, pp. 209--226, 1977.

\bibitem{curtin-dualtree}
R.~R. Curtin, W.~B. March, P.~Ram, D.~V. Anderson, A.~G. Gray, and C.~L.
  Isbell, ``Tree-independent dual-tree algorithms,'' in \emph{Proceedings of
  the 30th International Conference on International Conference on Machine
  Learning}, 2013, p. 1435–1443.

\bibitem{knn-original-dualtree}
A.~G. Gray and A.~W. Moore, ``'{N}-body' problems in statistical learning,'' in
  \emph{Proceedings of the 13th International Conference on Neural Information
  Processing Systems}, 2000, p. 500–506.

\bibitem{esmt}
W.~B. March, P.~Ram, and A.~G. Gray, ``Fast {Euclidean} minimum spanning tree:
  Algorithm, analysis, and applications,'' in \emph{Proceedings of the 16th ACM
  SIGKDD International Conference on Knowledge Discovery and Data Mining},
  2010, p. 603–612.

\bibitem{WangY0S21}
Y.~Wang, S.~Yu, Y.~Gu, and J.~Shun, ``A parallel batch-dynamic data structure
  for the closest pair problem,'' in \emph{37th International Symposium on
  Computational Geometry}, vol. 189, 2021, pp. 60:1--60:16.

\bibitem{dhulipala2021parallel}
L.~Dhulipala, Q.~C. Liu, J.~Shun, and S.~Yu, ``Parallel batch-dynamic k-clique
  counting,'' in \emph{Symposium on Algorithmic Principles of Computer Systems
  (APOCS)}, 2021, pp. 129--143.

\bibitem{clrs}
T.~H. Cormen, C.~E. Leiserson, R.~L. Rivest, and C.~Stein, \emph{Introduction
  to algorithms}.\hskip 1em plus 0.5em minus 0.4em\relax MIT press, 2009.

\bibitem{jaja}
J.~JaJa, \emph{Introduction to Parallel Algorithms}.\hskip 1em plus 0.5em minus
  0.4em\relax Addison-Wesley Professional, 1992.

\bibitem{BL99}
R.~D. Blumofe and C.~E. Leiserson, ``Scheduling multithreaded computations by
  work stealing,'' \emph{J. ACM}, vol.~46, no.~5, p. 720–748, sep 1999.

\bibitem{arge-cache-oblivious-book}
L.~Arge, G.~S. Brodal, and R.~Fagerberg, ``Cache-oblivious data structures,''
  in \emph{Handbook of Data Structures and Applications}.\hskip 1em plus 0.5em
  minus 0.4em\relax Chapman and Hall/CRC, 2018, pp. 545--565.

\bibitem{demaine2002cache}
E.~D. Demaine, ``Cache-oblivious algorithms and data structures,''
  \emph{Lecture Notes from the EEF Summer School on Massive Data Sets}, vol.~8,
  no.~4, pp. 1--249, 2002.

\bibitem{bloom1970}
B.~H. Bloom, ``Space/time trade-offs in hash coding with allowable errors,''
  \emph{Communications of the ACM}, 1970.

\bibitem{parlaylib}
G.~E. Blelloch, D.~Anderson, and L.~Dhulipala, ``{ParlayLib} - a toolkit for
  parallel algorithms on shared-memory multicore machines,'' in
  \emph{Proceedings of the 32nd ACM Symposium on Parallelism in Algorithms and
  Architectures}, 2020, p. 507–509.

\bibitem{gantao2015}
J.~Gan and Y.~Tao, ``{DBSCAN} revisited: Mis-claim, un-fixability, and
  approximation,'' in \emph{Proceedings of the ACM SIGMOD International
  Conference on Management of Data}, 2015, p. 519–530.

\bibitem{ht-dataset}
``Ht dataset,''
  \url{https://archive.ics.uci.edu/ml/datasets/Gas+sensors+for+home+activity+monitoring}.

\bibitem{ht-paper}
R.~Huerta, T.~Mosqueiro, J.~Fonollosa, N.~F. Rulkov, and I.~Rodriguez-Lujan,
  ``Online decorrelation of humidity and temperature in chemical sensors for
  continuous monitoring,'' \emph{Chemometrics and Intelligent Laboratory
  Systems}, vol. 157, pp. 169--176, 2016.

\bibitem{chem-dataset}
``Chem dataset,''
  \url{https://archive.ics.uci.edu/ml/datasets/Gas+sensor+array+under+dynamic+gas+mixtures}.

\bibitem{chem-paper}
J.~Fonollosa, S.~Sheik, R.~Huerta, and S.~Marco, ``Reservoir computing
  compensates slow response of chemosensor arrays exposed to fast varying gas
  concentrations in continuous monitoring,'' \emph{Sensors and Actuators B:
  Chemical}, vol. 215, pp. 618--629, 2015.

\bibitem{cosmo}
Y.~Kwon, D.~Nunley, J.~P. Gardner, M.~Balazinska, B.~Howe, and S.~Loebman,
  ``Scalable clustering algorithm for {N}-body simulations in a shared-nothing
  cluster,'' in \emph{International Conference on Scientific and Statistical
  Database Management}, 2010, pp. 132--150.

\end{thebibliography}

\end{document}